\documentclass[11pt]{article}

\usepackage[utf8]{inputenc}
\usepackage[T1]{fontenc}
\usepackage{newtxtext,newtxmath}       % Professional Times-like font
\usepackage{microtype}                 % Micro-typography for better spacing
\usepackage[margin=1in]{geometry}      % Standard consistent margins
\usepackage[dvipsnames]{xcolor}        % For NavyBlue links
\usepackage{setspace}
\usepackage{fancyhdr}
\definecolor{TitleNavy}{HTML}{113A5C}
\definecolor{TitleGold}{HTML}{9C7A2E}
\definecolor{AbstractBG}{HTML}{F5F8FB}
\definecolor{AbstractBorder}{HTML}{C7D3E0}

\usepackage{amsmath,mathtools,bm}
\makeatletter
\@ifundefined{openbox}{}{\let\openbox\@undefined}%
\makeatother
\usepackage{amsthm}
\usepackage{bbm}
\numberwithin{equation}{section}

\usepackage{graphicx}
\usepackage{booktabs}
\usepackage{array}
\usepackage{threeparttable}
\usepackage{caption}
\usepackage{algorithm}
\usepackage[noend]{algpseudocode}

\newcommand{\mcse}[1]{{\footnotesize\,(#1)}}

\usepackage{enumitem}
\usepackage{siunitx}
\setlist[itemize]{leftmargin=1.8em, itemsep=0.25em, topsep=0.35em}
\setlist[enumerate]{leftmargin=2em, itemsep=0.25em, topsep=0.35em}

\usepackage{natbib}
\setcitestyle{authoryear,round,semicolon}

\providecommand\IfDocumentMetadataT[1]{}
\usepackage[
  colorlinks=true,
  linkcolor=TitleNavy,
  citecolor=TitleNavy,
  urlcolor=TitleNavy
]{hyperref}
\usepackage[nameinlink,noabbrev]{cleveref}
\makeatletter
\renewcommand\section{\@startsection{section}{1}{\z@}%
  {-3.5ex \@plus -1ex \@minus -.2ex}%
  {2.3ex \@plus .2ex}%
  {\normalfont\Large\bfseries\color{TitleNavy}}}
\renewcommand\subsection{\@startsection{subsection}{2}{\z@}%
  {-3.0ex \@plus -1ex \@minus -.2ex}%
  {1.5ex \@plus .2ex}%
  {\normalfont\large\bfseries\color{TitleNavy}}}
\renewcommand\subsubsection{\@startsection{subsubsection}{3}{\z@}%
  {-2.5ex \@plus -1ex \@minus -.2ex}%
  {1.0ex \@plus .2ex}%
  {\normalfont\normalsize\bfseries\color{TitleNavy}}}
\makeatother

\newtheoremstyle{paperplain}
  {6pt plus 2pt minus 1pt}
  {6pt plus 2pt minus 1pt}
  {\itshape}
  {0pt}
  {\color{TitleNavy}\bfseries}
  {.}
  {0.6em}
  {}
\newtheoremstyle{paperdefinition}
  {6pt plus 2pt minus 1pt}
  {6pt plus 2pt minus 1pt}
  {\normalfont}
  {0pt}
  {\color{TitleNavy}\bfseries}
  {.}
  {0.6em}
  {}
\newtheoremstyle{paperremark}
  {6pt plus 2pt minus 1pt}
  {6pt plus 2pt minus 1pt}
  {\normalfont}
  {0pt}
  {\color{TitleNavy}\itshape}
  {.}
  {0.6em}
  {}

\theoremstyle{paperplain}
\newtheorem{theorem}{Theorem}[section]
\newtheorem{lemma}[theorem]{Lemma}
\newtheorem{proposition}[theorem]{Proposition}
\newtheorem{corollary}[theorem]{Corollary}

\theoremstyle{paperdefinition}
\newtheorem{definition}[theorem]{Definition}
\newtheorem{assumption}[theorem]{Assumption}

\theoremstyle{paperremark}
\newtheorem{remark}[theorem]{Remark}

\newcommand{\E}{\mathbb{E}}
\newcommand{\Var}{\operatorname{Var}}

\newcommand{\Pp}{\mathbb{P}}
\newcommand{\indep}{\mathrel{\perp\!\!\!\perp}}
\newcommand{\ind}[1]{\mathbbm{1}\{#1\}}
\newcommand{\cF}{\mathcal{F}}
\newcommand{\cG}{\mathcal{G}}

\newcommand{\pto}{\xrightarrow{p}}
\newcommand{\dto}{\xrightarrow{d}}

\newcommand{\pit}{\pi_t}

\newcommand{\mhat}{\widehat m}

\newcommand{\expit}{\mathrm{expit}}

\newcommand{\arxiv}[1]{arXiv:#1}
\newcommand{\PaperTitle}{Fixed-Horizon Self-Normalized Inference for Adaptive Experiments via Martingale AIPW/DML \\ with Logged Propensities}
\newcommand{\AuthorName}{Gabriel Saco}
\newcommand{\AuthorAffiliation}{Universidad del Pac\'ifico}
\newcommand{\AuthorORCID}{0009-0009-8751-4154}
\newcommand{\CodeRepositoryURL}{https://github.com/gsaco/martingale-aipw-dml}

\fancypagestyle{plain}{%
  \fancyhf{}
  \fancyfoot[C]{\small\thepage}

}

\begin{document}

% ============================================================================
% TITLE PAGE
% ============================================================================
\title{\PaperTitle}
\author{\AuthorName}
\date{}
\begingroup
\thispagestyle{plain}
\centering

\vspace*{-0.8em}
{\color{TitleNavy}\rule{\textwidth}{1.15pt}\par}
\vspace{1.3em}

{\LARGE\bfseries\color{TitleNavy}\PaperTitle\par}
\vspace{1.0em}

{\large \AuthorName\par}
\vspace{0.2em}
{\normalsize\textit{\AuthorAffiliation}\par}
\vspace{0.8em}

{\small\textsc{ORCID}: \href{https://orcid.org/\AuthorORCID}{\AuthorORCID}\par}
\vspace{0.3em}
{\small\textsc{Replication code}: \href{\CodeRepositoryURL}{\nolinkurl{\CodeRepositoryURL}}\par}

\vspace{0.9em}
{\color{TitleGold}\rule{0.72\textwidth}{0.9pt}\par}
\endgroup

\vspace{1.2em}
\begin{center}
\setlength{\fboxsep}{11pt}
\fcolorbox{AbstractBorder}{AbstractBG}{%
\begin{minipage}{0.93\textwidth}
\textbf{\large Abstract}
\vspace{0.45em}

\small
Adaptive randomized experiments update treatment probabilities as data accrue, but still require an end-of-study interval for the average treatment effect (ATE) at a prespecified horizon. Under adaptive assignment, propensities can keep changing, so the predictable quadratic variation of AIPW/DML score increments may remain random. When no deterministic variance limit exists, Wald statistics normalized by a single long-run variance target can be conditionally miscalibrated given the realized variance regime. We assume no interference, sequential randomization, i.i.d.\ arrivals, and executed overlap on a prespecified scored set, and we require two auditable pipeline conditions: the platform logs the executed randomization probability for each unit, and the nuisance regressions used to score unit $t$ are constructed predictably from past data only. These conditions make the centered AIPW/DML scores an exact martingale difference sequence. Using self-normalized martingale limit theory, we show that the Studentized statistic, with variance estimated by realized quadratic variation, is asymptotically $\mathcal{N}(0,1)$ at the prespecified horizon, even without variance stabilization. Simulations validate the theory and highlight when standard fixed-variance Wald reporting fails.  
\end{minipage}}
\end{center}

\newpage

% ============================================================================
% 1. INTRODUCTION
% ============================================================================

\section{Introduction}\label{sec:intro}

Adaptive randomized experiments---including response-adaptive clinical trials, contextual bandits, and large-scale platform experimentation systems---update assignment probabilities as data accrue to balance learning and deployment \citep{kasySautmann2021}. In practice, however, many platforms still require conventional end-of-study reporting for classical causal estimands such as the superpopulation average treatment effect (ATE) computed once at a prespecified horizon. Hereafter, we denote by $\pi_t$ the executed assignment probability used to randomize unit $t$ after applying any platform guardrails, as recorded in the experiment log. In this paper, we study fixed-horizon Wald inference for the standard logged-propensity AIPW/DML estimator, the sample average of doubly robust pseudo-outcomes scored using these logged propensities. Under adaptive assignment, the propensity process $\{\pi_t\}$ is itself data-dependent, so the predictable quadratic variation of the AIPW/DML score increments can remain replication-random and need not converge to a single deterministic long-run variance target.

Some end-of-study Wald arguments for AIPW/A2IPW (and related DML estimators) under adaptivity proceed via Slutsky steps that rely on a deterministic variance target for the predictable quadratic variation, often enforced through stabilization-type or design-stability conditions on assignment probabilities and/or average conditional variances \citep{hadad2021confidence,zhan2021off,katoIshiharaHondaNarita2020,cook2024semiparametric,LiOwen2024,sengupta2025designstability}. On modern platforms, however, the policy can keep reacting to noisy intermediate estimates. Clipping and guardrails can activate intermittently and batch updates can induce regime switches. When this happens, a Wald statistic normalized using a single deterministic variance target can be systematically miscalibrated conditional on the realized variance regime, even if marginal coverage appears close to nominal.

We treat the adaptive assignment policy as given but assume that the platform logs the executed propensity used to randomize each unit and that nuisance regressions used for AIPW/DML scoring are fit predictably using only past data. Related work emphasizes that careful use of the logging policy and past-only fitting is central for post-adaptive inference \citep{bibaut2021post,katoYasuiMcAlinn2021,cook2024semiparametric}. Under these auditable conditions, the centered score increments form an exact martingale difference sequence, and we obtain fixed-horizon Wald inference by studentizing with realized quadratic variation along the realized propensity path. This yields asymptotic $\mathcal{N}(0,1)$ calibration without requiring the predictable quadratic variation to converge to a deterministic long-run variance target.\footnote{We do not claim anytime-valid or optional-stopping guarantees.}

\vspace{0.5cm}
\noindent\textbf{Contributions.}
\begin{itemize}
\item \textbf{Auditable martingale scoring.} We formalize a logging/predictability contract, logged executed propensities and predictable nuisance fitting, under which centered AIPW/DML score increments form an exact martingale difference sequence (Lemma~\ref{lem:mds}).
\item \textbf{Fixed-horizon self-normalized Wald inference.} We prove that the usual Studentized statistic, with variance estimated by realized quadratic variation, is asymptotically $\mathcal{N}(0,1)$ at a prespecified horizon even when no deterministic long-run variance limit exists (Theorem~\ref{thm:studentized}).
\item \textbf{Feasible studentization.} We show that the standard plug-in studentizer used in practice consistently estimates realized quadratic variation, so the feasible Wald interval inherits the same fixed-horizon validity (Proposition~\ref{prop:feasibility}).
\item \textbf{Oracle benchmarking and nuisance-learning effects.} We provide a conditional second-moment decomposition yielding an oracle precision benchmark and isolate a nonnegative augmentation term capturing variance inflation from nuisance error (Proposition~\ref{prop:var_decomp}). Under weighted $L^2$ convergence, the feasible statistic is asymptotically oracle-equivalent (Theorem~\ref{thm:oracle_equivalence}).
\end{itemize}

The subsequent sections are structured as follows. Section~\ref{sec:related} reviews related work. Section~\ref{sec:model} states the model and assumptions. Section~\ref{sec:estimation} presents the estimator and auditable implementation details. Section~\ref{sec:main_results} develops the main theoretical results and Section~\ref{sec:simulation} reports simulations. The appendices collect supporting limit-theory background, additional results and proofs, and an operational logging protocol.

\section{Related Work}\label{sec:related}

\subsection*{Inference after adaptive data collection}
A growing literature studies inference with adaptively collected data, where observations are generated under an evolving information set and classical i.i.d.\ arguments do not directly apply. Early econometric work by \citet{hahnHiranoKarlan2011} highlighted how propensity information can be leveraged for inference in sequential designs. More recent general frameworks derive asymptotic representations for sequential decisions and adaptive experiments under broad conditions \citep{hiranoPorter2023asymptotics}. In the contextual-bandit and adaptive-experiment literature, fixed-horizon inference has also been developed via batched OLS/batchwise studentization arguments \citep{zhang2020large}. Our focus is narrower but operationally central for experimentation platforms: fixed-horizon ATE reporting with logged executed propensities and predictable AIPW/DML scoring. In particular, we target settings where the predictable quadratic variation remains random across replications, so that deterministic-variance normalizations can be conditionally miscalibrated. General background on response-adaptive randomization and bandit-style designs can be found in \citet{rosenberger2015randomization} and \citet{villar2015bandit}.

\subsection*{Stabilization, adaptive weighting and batching}
In adaptive experimentation and off-policy evaluation, evolving propensities can create heavy tails and regime-dependent uncertainty, motivating variance-control strategies. In policy evaluation, \citet{hadad2021confidence} and \citet{zhan2021off} develop adaptive weighting schemes for augmented IPW/DR scores to obtain asymptotically normal $t$-statistics. For post-contextual-bandit inference, \citet{bibaut2021post} proposes stabilized doubly robust constructions that estimate conditional scale components using only past data. Another route restricts the data-collection design---for example through batching---to recover classical CLTs for bandit estimators \citep{zhang2020large}. In contrast, we keep the standard logged-propensity AIPW/DML estimator unchanged and do not require batching or adaptive reweighting. Instead, we rely on a martingale score representation and studentization by realized quadratic variation.

\subsection*{Design stability and deterministic variance-limit CLTs}
A further line of work derives fixed-horizon CLTs for IPW/AIPW-type ATE estimators under explicit design stability conditions ensuring that inverse-propensity averages and/or average conditional variances converge to nonrandom limits, yielding conventional deterministic asymptotic variances for Wald reporting \citep{sengupta2025designstability}. Related perspectives arise when one engineers assignment rules (or batchwise designs) to target efficiency or precision for A2IPW/DML-style estimators \citep{katoIshiharaHondaNarita2020,LiOwen2024,cook2024semiparametric}. More generally, the adaptive-experiment literature emphasizes subtleties around what ``the logging policy'' means operationally when platforms implement clipping, guardrails, or algorithmic randomness \citep{katoYasuiMcAlinn2021}. Our contribution is complementary: we assume the executed propensity actually used to randomize each unit is logged and we avoid requiring convergence of the predictable quadratic variation to a deterministic limit by using realized quadratic variation as the normalizer.

\subsection*{Self-normalized martingale theory for quadratic-variation studentization}
Our fixed-horizon Wald statistic is a self-normalized martingale functional. Classic references for martingale CLTs and self-normalized processes include \citet{hallHeyde1980} and \citet{delapena2009self}, as well as the survey \citet{shao2013survey}. Modern probability theory provides refined asymptotic and nonasymptotic control for self-normalized martingales, including Berry--Esseen bounds \citep{fanShao2017}, Cram{\'e}r-type moderate deviations \citep{fanGramaLiuShao2019}, and concentration inequalities \citep{bercuTouati2019}. We apply this theory to AIPW/DML score increments under adaptive assignment, using realized quadratic variation to obtain a fixed-horizon $\mathcal{N}(0,1)$ approximation without a deterministic variance limit.

\subsection*{Anytime-valid and time-uniform alternatives}
The results in this paper are prespecified-horizon and do not provide optional-stopping guarantees. When inference must remain valid under continuous monitoring or data-dependent stopping, time-uniform methods based on test supermartingales and confidence sequences are appropriate \citep{howard2021confidence,waudby2024timeuniform}. Related time-uniform tools also appear alongside fixed-time inference in adaptive-experiment work \citep{cook2024semiparametric,katoIshiharaHondaNarita2020,waudbySmithRamdas2024betting}. We focus instead on conventional fixed-horizon reporting with realized quadratic-variation studentization (see Remark~\ref{rem:optional_stopping} for additional discussion).
\section{Model and Assumptions}\label{sec:model}

\subsection{Propensities and Logged Assignment Rule}\label{sec:propensities}

Consider a stream of experimental units indexed by $t=1,\dots,n$. For each unit, covariates $X_t$ are observed before assignment. The platform computes an assignment probability $\pi_t\in(0,1)$. Then, randomizes $A_t\in\{0,1\}$ using $\pi_t$. Finally, observes an outcome $Y_t$. 

We observe and store $Z_t:=(X_t,A_t,Y_t,\pi_t)$ in time order. Let
\begin{equation}\label{eq:Ft}
\cF_t := \sigma(Z_1,\dots,Z_t), \qquad \cF_0 = \{\emptyset,\Omega\},
\end{equation}
be the data filtration. Let
\begin{equation}\label{eq:Gt}
\cG_t := \sigma(\cF_{t-1},X_t,\pi_t)
\end{equation}
be the $\sigma$-field immediately before randomization at time $t$, after the platform has computed the executed assignment probability $\pi_t$. The policy may be arbitrarily adaptive or randomized. The fixed-horizon validity results below require only that the realized executed propensity used to randomize $A_t$ is recorded in the log (Assumption~\ref{ass:logging_integrity}).

\begin{definition}[Adaptive logged executed propensity]\label{def:propensity}
At time $t$, after observing $(\cF_{t-1},X_t)$ and any algorithmic randomness used to form the propensity, the platform records the executed assignment probability $\pi_t\in(0,1)$ and draws treatment according to
\[
A_t \mid \cG_t \sim \mathrm{Bernoulli}(\pi_t),
\]
where $\cG_t$ is the pre-treatment $\sigma$-field in \eqref{eq:Gt}. If $\pi_t$ is deterministic given $(\cF_{t-1},X_t)$, then $\pi_t=\Pp(A_t=1\mid \cF_{t-1},X_t)$. If $\pi_t$ is produced using additional exogenous randomness, then $\Pp(A_t=1\mid \cF_{t-1},X_t)=\E[\pi_t\mid \cF_{t-1},X_t]$ while still $\Pp(A_t=1\mid \cG_t)=\pi_t$. We treat any such platform-side randomness as exogenous, i.e., independent of $(Y_t(0),Y_t(1))$ conditional on $(\cF_{t-1},X_t)$.
\end{definition}

\begin{assumption}[Logging integrity]\label{ass:logging_integrity}
For each $t$, the logged propensity $\pi_t$ is $\cG_t$-measurable, takes values in $(0,1)$, and equals the probability passed to the randomization device that generated $A_t$, i.e.,
\[
\Pp(A_t=1\mid \cG_t)=\pi_t \quad \text{a.s.}
\]
Equivalently, $A_t\mid \cG_t\sim \mathrm{Bernoulli}(\pi_t)$ with the logged $\pi_t$.
\end{assumption}

\subsection{Audit diagnostics and data contract}\label{sec:audit}

\begin{remark}[Auditing logged propensities]\label{rem:logging_audit}
Assumption~\ref{ass:logging_integrity} requires that, conditional on the platform history $\cG_t$ (including the logged probability $\pi_t$), the treatment assignment satisfies $A_t\mid \cG_t \sim \mathrm{Bernoulli}(\pi_t)$. This is a design-stage property. It holds only if the log records the probability that was actually passed to the randomization device.
A basic diagnostic is calibration of $A_t$ against $\pi_t$. For any bin $B\subset(0,1)$ with many observations, let
\[
N_B:=\sum_{t=1}^n \ind{\pi_t\in B},\qquad
\overline A_B:=\frac{1}{N_B}\sum_{t:\pi_t\in B} A_t,\qquad
\overline \pi_B:=\frac{1}{N_B}\sum_{t:\pi_t\in B} \pi_t.
\]
In practice, restrict attention to bins with $N_B\ge N_{\min}$ for a user-chosen $N_{\min}\ge 1$ (or merge empty/small bins).
Under correct logging, $\overline A_B$ should be close to $\overline \pi_B$ up to martingale sampling variability; under adaptivity, dependence can matter, so formal testing can be based on martingale concentration/self-normalized methods for the MDS $(A_t-\pi_t)$. These checks can detect severe mis-logging or implementation bugs, but they cannot certify correct logging or validate the causal assumptions in Assumptions~\ref{ass:sutva}--\ref{ass:unconf}.
\end{remark}

\begin{remark}[Timing and measurability]\label{rem:timing}
The experiment unfolds in the order (1) observe $X_t$; (2) the platform computes and logs an executed propensity $\pi_t$ as a function of $(\cF_{t-1},X_t)$; (3) treatment is drawn according to $A_t\mid \cG_t\sim\mathrm{Bernoulli}(\pi_t)$; and (4) the outcome $Y_t$ is revealed and logged. In particular, $\pi_t$ is $\cG_t$-measurable, while $(A_t,Y_t)$ are $\cF_t$-measurable.
This timing underlies the predictability requirement in Section~\ref{sec:estimation} and the data-contract protocol in Appendix~\ref{app:repro}.
\end{remark}

\begin{table}[!htbp]
\centering
\small
\begin{tabular}{@{}p{0.05\textwidth}p{0.25\textwidth}p{0.55\textwidth}@{}}
\toprule
Step & Object & Measurability / check \\
\midrule
1 & Observe covariates $X_t$ & observed at $t$ (pre-treatment) \\
2 & Choose/log $\pi_t$ & $\pi_t$ must be $\cG_t$-measurable; log the executed random-number generator (RNG) probability \\
3 & Randomize $A_t$ & $A_t\mid\cG_t\sim \mathrm{Bernoulli}(\pi_t)$ (Assumption~\ref{ass:logging_integrity}) \\
4 & Observe outcome $Y_t$ & realized after $A_t$ \\
5 & Fit/update nuisances for $t{+}1$ & nuisance used at $t{+}1$ must be $\cF_t$-measurable \\
\bottomrule
\end{tabular}
\caption{Timing and measurability. The key analytical requirement is that the nuisance used to score unit $t$ is $\cF_{t-1}$-measurable (Assumption~\ref{ass:predictable_nuis}).}
\label{tab:timing}
\end{table}

\begin{table}[!htbp]
\centering
\small
\begin{tabular}{@{}ll@{}}
\toprule
Symbol & Meaning \\
\midrule
$n$, $t$ & fixed horizon; unit index $t=1,\dots,n$ \\
$Z_t$ & logged record $(X_t,A_t,Y_t,\pi_t)$ \\
$\cF_t$, $\cG_t$ & post-outcome history; pre-treatment info \eqref{eq:Ft}--\eqref{eq:Gt} \\
$\pi_t$ & logged executed propensity; $\Pp(A_t=1\mid \cG_t)=\pi_t$ (Ass.~\ref{ass:logging_integrity}) \\
$Y_t(a)$, $Y_t$ & potential outcomes; observed $Y_t=A_tY_t(1)+(1-A_t)Y_t(0)$ \\
$m_a^\star$ & true regression $x\mapsto\E[Y(a)\mid X=x]$ \\
$\widehat m_{t-1,a}$ & predictable nuisance used at time $t$ (Ass.~\ref{ass:predictable_nuis}) \\
$\phi_t(m,\pi)$ & AIPW/DR pseudo-outcome evaluated at propensity $\pi$ \eqref{eq:phi_def} \\
$\widehat{\phi}_t$, $\widehat{\theta}$ & scored pseudo-outcome $\widehat{\phi}_t=\phi_t(\widehat m_{t-1},\pi_t)$ and estimator \eqref{eq:phi_hat} \\
$\theta_0$, $\widehat\theta$ & ATE \eqref{eq:theta0}; estimator \eqref{eq:phi_hat} \\
$\xi_t$, $S_{\mathcal{T}}$ & increment $\xi_t=\widehat\phi_t-\theta_0$; sum $S_{\mathcal{T}}=\sum_{t\in\mathcal{T}}\xi_t$ \\
$\mathcal{T}$, $n_{\mathrm{eff}}$ & scored index set; $n_{\mathrm{eff}}:=|\mathcal{T}|$ \\
$Q_{\mathcal{T}}$, $V_{\mathcal{T}}^2$, $\widehat V$ & realized/predictable QV \eqref{eq:quadratic_variations}; studentizer \eqref{eq:Vhat} \\
\bottomrule
\end{tabular}
\caption{Summary of notation.}
\label{tab:notation}
\end{table}

\subsection{Potential Outcomes and Target Parameter}\label{sec:estimand}

We adopt the potential outcomes framework with the usual no interference and consistency conventions.

\begin{assumption}[Consistency and no interference; SUTVA]\label{ass:sutva}
For each unit $t$, there are well-defined potential outcomes $(Y_t(0),Y_t(1))$ that depend only on the assignment of unit $t$ itself \citep[see][]{rubin1980comment}. The observed outcome satisfies
\[
Y_t = Y_t(A_t) = A_t Y_t(1) + (1-A_t)Y_t(0)\qquad\text{a.s.}
\]
\end{assumption}

The target parameter is the average treatment effect (ATE) in the superpopulation:
\begin{equation}\label{eq:theta0}
\theta_0 := \E\big[ Y(1)-Y(0)\big],
\end{equation}
where $(X,Y(0),Y(1))$ denotes a generic draw from the superpopulation described in Assumption~\ref{ass:iid}. Assumption~\ref{ass:sutva} is standard. It rules out spillovers or network effects, which require separate methods.

\subsection{Assumptions}\label{sec:assumptions}

We maintain the following assumptions throughout.

\begin{assumption}[Superpopulation arrivals; no selection]\label{ass:iid}
The sequence $\{(X_t,Y_t(0),Y_t(1))\}_{t=1}^n$ is independent across $t$ and identically distributed with generic draw $(X,Y(0),Y(1))$. In particular, the adaptive policy affects only treatment assignment, not which units/covariates arrive.
\end{assumption}

\noindent
We write all expectations with respect to this superpopulation draw, so $\theta_0=\E[Y(1)-Y(0)]$ and $\tau(x):=\E[Y(1)-Y(0)\mid X=x]$ satisfy $\theta_0=\E[\tau(X)]$. As noted in Remark~\ref{rem:weaker_iid}, the proofs only require the conditional mean-stationarity condition $\E[\tau(X_t)\mid \cF_{t-1}]=\theta_0$.

\begin{assumption}[Sequential randomization/no anticipation]\label{ass:unconf}\label{ass:seq_rand}
$A_t \indep (Y_t(0),Y_t(1))\mid \cG_t$, where $\cG_t$ is the pre-treatment $\sigma$-field in \eqref{eq:Gt}. Equivalently, conditional on the pre-treatment information (including the realized executed propensity $\pi_t$), the treatment draw is independent of the potential outcomes.
\end{assumption}

\begin{assumption}[Executed overlap on scored units]\label{ass:overlap}
There exists $\varepsilon>0$ such that for every time index whose score is included in the final estimator (i.e., $t\in\mathcal{T}$ in Section~\ref{sec:estimation}),
\begin{equation}\label{eq:overlap}
\varepsilon \le \pi_t \le 1-\varepsilon \qquad \text{a.s.}
\end{equation}
A sufficient operational mechanism is to enforce clipping of the executed propensity before randomization on all scored units.
\end{assumption}

\begin{assumption}[Moment conditions]\label{ass:moments}
For each $a\in\{0,1\}$, $\E[|Y(a)|^4]<\infty$.
\end{assumption}

\begin{assumption}[Nuisance stability]\label{ass:nuis_stab}
There exists a finite constant $C<\infty$ such that for each $a\in\{0,1\}$,
\[
\sup_{n\ge 1}\sup_{t\in\mathcal{T}} \E[|\widehat m_{t-1,a}(X_t)|^4]\le C.
\]
In unbounded-outcome settings, a sufficient alternative is to truncate/clamp $\widehat m_{t-1,a}(X_t)$ at a large threshold; see Remark~\ref{rem:enforce_nuis}.
\end{assumption}

\begin{remark}[How to enforce Assumption~\ref{ass:nuis_stab}]\label{rem:enforce_nuis}
If outcomes are known to be bounded, say $Y_t\in[-B,B]$, we can enforce Assumption~\ref{ass:nuis_stab} by clipping $\widehat m_{t-1,a}(x)$ to $[-B,B]$ (or slightly wider) for each $a$. For unbounded outcomes, one may instead truncate at a large threshold or use robust regression; this is a technical device to control moments and does not change the target estimand.
\end{remark}

\begin{remark}[Weaker moment conditions are possible but are not needed for this paper]\label{rem:moment_conditions}
Self-normalized martingale CLTs only require finite $(2+\delta)$ moments of the
score increments. We impose fourth moments to keep the quadratic-variation
equivalence proof fully transparent.
\end{remark}

\begin{remark}[Design-stage, analysis-stage and auditable conditions]\label{rem:assumptions}
Our validity results combine (i) substantive causal/model assumptions and (ii) pipeline conditions that can be checked from logs and stored artifacts.
\begin{enumerate}
\item \emph{Substantive assumptions.} SUTVA/no interference (Assumption~\ref{ass:sutva}) and sequential randomization with a stable target estimand (Assumption~\ref{ass:unconf}), together with i.i.d.\ arrivals/no selection (Assumption~\ref{ass:iid}) or an explicit mean-stationarity alternative (Remark~\ref{rem:weaker_iid}).
\item \emph{Auditable design-time requirements.} Logged executed propensities (Assumption~\ref{ass:logging_integrity}) and executed overlap on the scored units (Assumption~\ref{ass:overlap}), which must be enforced at assignment time if needed (e.g., by clipping $\pi_t$).
\item \emph{Auditable analysis-time requirements.} A prespecified (deterministic) scored set $\mathcal{T}$ (Assumption~\ref{ass:T_det}) and predictable nuisance construction (Assumption~\ref{ass:predictable_nuis}), which rule out ``peeking'' when fitting the regressions used to score each unit.
\item \emph{Regularity conditions.} Moment and stability bounds (Assumptions~\ref{ass:moments} and \ref{ass:nuis_stab}) and variance growth (Assumption~\ref{ass:var_growth}) needed to apply martingale limit theory.
\end{enumerate}
Finally, all confidence intervals are fixed-horizon: they should be reported only at the prespecified sample size and are generally invalid under optional stopping (Remark~\ref{rem:optional_stopping} and Appendix~\ref{app:repro}).
\end{remark}

% ============================================================================
% 3. PREDICTABLE AIPW/DML ESTIMATION
% ============================================================================
\section{Predictable AIPW/DML Estimation}\label{sec:estimation}

\subsection{The Doubly Robust Score}\label{sec:drscore}

Let $m_a^\star(x):=\E[Y(a)\mid X=x]$ and write $m^\star:=(m_0^\star,m_1^\star)$. For any candidate $m=(m_0,m_1)$ and any $\pi\in(0,1)$, define the usual augmented inverse-propensity weighted (AIPW) / doubly robust (DR) score \citep{robinsRotnitzkyZhao1994,bangRobins2005} (pseudo-outcome)
\begin{equation}\label{eq:phi_def}
\phi_t(m,\pi)
:=
\frac{A_t}{\pi}\big(Y_t-m_1(X_t)\big)
-\frac{1-A_t}{1-\pi}\big(Y_t-m_0(X_t)\big)
+m_1(X_t)-m_0(X_t).
\end{equation}
When evaluating at the logged executed propensity, we write $\phi_t(m):=\phi_t(m,\pi_t)$ for brevity. The quantity $\phi_t(m,\pi_t)$ is observable given $(X_t,A_t,Y_t,\pi_t)$ and a supplied regression pair $m$.

For the oracle regression $m^\star$, one can decompose
\begin{equation}\label{eq:phi_oracle}
\phi_t(m^\star)
=
\tau(X_t)
+\frac{A_t}{\pi_t}\big(Y_t(1)-m_1^\star(X_t)\big)
-\frac{1-A_t}{1-\pi_t}\big(Y_t(0)-m_0^\star(X_t)\big).
\end{equation}
The centered increment $\phi_t(m^\star)-\theta_0$ has mean zero and is the object governed by the martingale CLT.

\begin{remark}[Terminology]\label{rem:dr_terminology}
The estimating function $\phi_t(m,\pi_t)$ in \eqref{eq:phi_def} has the classical doubly robust/AIPW form. When evaluated at the true regression functions $m^\star$ (and the executed propensities), the centered score $\phi_t(m^\star,\pi_t)-\theta_0$ coincides with the efficient influence function for the ATE in the corresponding i.i.d.\ model with known propensity (see, e.g., \citet{hahn1998role,tsiatis2006semiparametric}). For general $m$, it is simply an estimating function with the same doubly robust form. In off-policy evaluation and contextual-bandit settings, essentially the same form appears as a doubly robust score for value estimation \citep{dudikLangfordLi2011,dudikErhanLangfordLi2014}.
\end{remark}

\begin{remark}[Multi-arm extensions]\label{rem:multiarm}
The results extend directly to $K>2$ arms by replacing the scalar propensity with a probability vector and using the corresponding multivariate AIPW score. For simplicity, we present the binary case.
\end{remark}

\subsection{Forward Cross-Fitting}\label{sec:fcf}

Standard cross-fitting partitions data into folds and estimates nuisances on held-out folds. Under adaptivity, we must respect time: nuisance estimates used at time $t$ must be constructed from data strictly before $t$.

\begin{definition}[Forward cross-fitting]\label{def:forward}
Partition indices $\{1, \ldots, n\}$ into $K$ contiguous blocks $I_1, \ldots, I_K$. The partition is deterministic and fixed prior to data collection. For each block $k \ge 2$:
\begin{enumerate}[leftmargin=2em, label=\arabic*.]
\item Estimate nuisance functions $(\mhat^{(-k)}_0, \mhat^{(-k)}_1)$ using only data from blocks $I_1, \ldots, I_{k-1}$.
\item For each $t \in I_k$, evaluate the score using these estimates and the logged propensity $\pit$.
\end{enumerate}
For $t \in I_1$, use a pilot estimate or exclude $I_1$ from inference.
\end{definition}

\begin{remark}[Common pitfall]\label{rem:predictability_cf}
A common pitfall is to use i.i.d.-style sample splitting or cross-fitting patterns that inadvertently allow information from unit $t$ (or future units) to enter the nuisance regression used to score unit $t$. In adaptive experiments, such ``leakage'' violates Assumption~\ref{ass:predictable_nuis} and can break the martingale difference property in Lemma~\ref{lem:mds}. As a result, the Studentized Wald statistic built from leaky scores generally lacks the fixed-horizon guarantee of Theorem~\ref{thm:studentized}: it may still work in some finite-sample designs, but its validity is no longer ensured by the martingale argument. Appendix~\ref{app:repro} gives an auditable implementation pattern (forward cross-fitting) that enforces predictability.
\end{remark}

\begin{assumption}[Predictable nuisance construction]\label{ass:predictable_nuis}
For each scored time $t\in\mathcal{T}$ and treatment arm $a\in\{0,1\}$, the regression function $\widehat m_{t-1,a}$ used in the score for unit $t$ is $\mathcal{F}_{t-1}$-measurable. Equivalently, conditional on the past $\mathcal{F}_{t-1}$, the function $\widehat m_{t-1,a}$ does not depend on $(A_t,Y_t)$ or on any future data.
\end{assumption}

\noindent
Assumption~\ref{ass:predictable_nuis} is the formal predictability requirement (``no-peeking'', i.e., no use of contemporaneous or future outcomes) that makes the scored pseudo-outcomes a martingale difference sequence, and predictable fits are not based on non-predictable (``leaky'') scores (see Lemma~\ref{lem:operational_pred} below). The measurability here refers to the fitted function object $\widehat m_{t-1,a}$; it may then be evaluated at the current covariate $X_t$ to form $\widehat m_{t-1,a}(X_t)$. It rules out i.i.d.-style cross-fitting schemes that, even indirectly, use contemporaneous or future outcomes when constructing the nuisance for time $t$. A simple way to enforce predictability is forward cross-fitting (Definition~\ref{def:forward}): partition the time axis into blocks $I_1,\dots,I_K$, fit each nuisance model once per block using only data from $\cup_{j<k} I_j$, and hold the fitted model fixed while scoring units in $I_k$. To make this requirement auditable, an implementation should persist (i) the training indices used for each fit, (ii) fold assignments if internal cross-validation is used, and (iii) all learner randomness.

\begin{lemma}[A sufficient operational condition for predictability]\label{lem:operational_pred}
Under forward cross-fitting (Definition~\ref{def:forward}), if for each block
$I_k$ the analyst fits $(\widehat m_0^{(-k)},\widehat m_1^{(-k)})$ using only
data from blocks $I_1,\dots,I_{k-1}$ and then reuses these fitted objects
unchanged for all $t\in I_k$, then Assumption~\ref{ass:predictable_nuis} holds.
\end{lemma}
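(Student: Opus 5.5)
The plan is to verify $\cF_{t-1}$-measurability directly from the construction. Fix a scored time $t\in\mathcal{T}$ and let $k\ge 2$ be the unique block index with $t\in I_k$. Because the partition $I_1,\dots,I_K$ is deterministic and fixed before data collection, the index $k$ is a nonrandom function of $t$. The same holds for the training index set $J_k:=\bigcup_{j<k} I_j$ and its maximum $s_k:=\max J_k$. Contiguity of the blocks gives $s_k<\min I_k\le t$, so $s_k\le t-1$.

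The key step is to formalize the fitted object as a measurable map. I would write $\widehat m_a^{(-k)}=\Psi_{k,a}\big((Z_s)_{s\in J_k},U_k\big)$, where $\Psi_{k,a}$ is a fixed measurable learning map (into a space of functions, or equivalently a parameter space), and $U_k$ is any learner randomness, such as cross-validation folds or seeds. Assumption~\ref{ass:predictable_nuis} concerns the function object, so I would state measurability for the parametrized fit, or for the evaluation map $(x,\omega)\mapsto\widehat m_{t-1,a}(x)$ jointly. Since $(Z_s)_{s\in J_k}$ is $\sigma(Z_1,\dots,Z_{s_k})\subseteq\cF_{t-1}$-measurable and compositions of measurable maps are measurable, $\widehat m^{(-k)}_a$ is $\cF_{t-1}$-measurable. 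The hypothesis that the fitted object is reused unchanged for every $t\in I_k$ then gives $\widehat m_{t-1,a}:=\widehat m_a^{(-k)}$. This object is $\cF_{s_k}$-measurable, and $\cF_{s_k}\subseteq\cF_{t-1}$ by monotonicity of the filtration.

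The main subtlety is the learner randomness $U_k$, because $\cF_{t-1}$ is generated only by the logged records. I would handle it in one of two ways. The first is to treat $U_k$ as persisted and enlarge the filtration to $\cF_t\vee\sigma(U_1,\dots,U_K)$, noting that $U$ is exogenous, i.e., independent of the data. The second is to require deterministic seeds, which is the paper's requirement that learner randomness be persisted. Under either reading the subsequent martingale arguments remain valid, since the exogenous randomness is independent of $(A_t,Y_t(0),Y_t(1))$ given the past. Finally, I would note that units in $I_1$ are either excluded from $\mathcal{T}$ or scored with a pilot estimate fixed before data collection, which is $\cF_0$-measurable. This case is trivially covered, so the condition holds for all $t\in\mathcal{T}$.
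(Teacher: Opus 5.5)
Your proposal is correct and follows the same route as the paper's one-line proof. The block-$k$ fit is a measurable function of $(Z_s)_{s\in I_1\cup\cdots\cup I_{k-1}}$, and because the blocks are contiguous and time-ordered, this generates a sub-$\sigma$-field of $\mathcal{F}_{t-1}$ for every $t\in I_k$.

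Your extra points are elaborations that the paper leaves implicit: joint measurability of the evaluation map, the burn-in/pilot case, and learner randomness. Only your deterministic-seed reading verifies Assumption~\ref{ass:predictable_nuis} literally, with $\mathcal{F}_{t-1}=\sigma(Z_1,\dots,Z_{t-1})$; enlarging the filtration verifies a modified version instead. In that modified version, $U$ need only be independent of the primitive arrivals, outcomes and assignment draws, not of the logged data, since the policy may depend on the fits.
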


\begin{proof}
For $t\in I_k$, the fitted objects depend only on
$\sigma(Z_s: s\in I_1\cup\cdots\cup I_{k-1})\subseteq \mathcal{F}_{t-1}$ and are
therefore $\mathcal{F}_{t-1}$-measurable.
\end{proof}

\begin{remark}[Connection to A2IPW terminology]\label{rem:a2ipw}
Our $\widehat{\phi}_t$ is the same AIPW/DR pseudo-outcome used in the adaptive-experiment literature under the name \emph{A2IPW} (adaptive AIPW). At time $t$ the score is computed using outcome regressions fitted on past data only, together with the logged executed propensity used to randomize $A_t$ \citep{katoIshiharaHondaNarita2020,cook2024semiparametric}. The present paper emphasizes that this predictability requirement is not merely a convenience: it is the condition that yields an exact martingale difference sequence and enables fixed-horizon Studentized inference without any stabilization of propensities or conditional variances.
\end{remark}

\subsection{The Estimator}\label{sec:estimator}

Let $\mathcal{T}\subseteq\{1,\dots,n\}$ denote the set of indices for which the nuisance used at time $t$ is predictable (i.e., $\widehat m_{t-1}$ is $\cF_{t-1}$-measurable).
Under forward cross-fitting with a burn-in block $I_1$, one typically takes $\mathcal{T}:=\{1,\dots,n\}\setminus I_1$.
Let $n_{\mathrm{eff}}:=\sum_{t=1}^n \ind{t\in\mathcal{T}}=|\mathcal{T}|$. Throughout the asymptotic theory we treat the scored index set $\mathcal{T}$ as
deterministic, as is the case under the
forward-block construction in Definition~\ref{def:forward}.

\begin{assumption}[Deterministic scored set]\label{ass:T_det}
The scored index set $\mathcal{T}\subseteq\{1,\dots,n\}$ is fixed prior to data collection (equivalently, $T_t:=\ind{t\in\mathcal{T}}$ is nonrandom for each $t$).
\end{assumption}

\begin{remark}[Predictable scored sets]\label{rem:T_predictable}
If $T_t:=\ind{t\in\mathcal{T}}$ is allowed to be $\cF_{t-1}$-measurable, the same proof strategy applies to $\tilde\xi_t:=T_t(\widehat\phi_t-\theta_0)$. In this case, interpret $n_{\mathrm{eff}}:=\sum_{t=1}^n T_t$ and define $Q_{\mathcal{T}}$ and $V_{\mathcal{T}}^2$ using $\tilde\xi_t$. We omit further details of the predictable-$\mathcal{T}$ extension to keep the note focused.
\end{remark}

\paragraph{Estimator and studentizing factor.}
The cross-fitted pseudo-outcome is
\begin{equation}\label{eq:phi_hat}
\widehat{\phi}_t := \phi_t(\widehat m_{t-1},\pi_t),
\qquad
\widehat{\theta} := \frac{1}{n_{\mathrm{eff}}}\sum_{t\in \mathcal{T}} \widehat{\phi}_t.
\end{equation}
where $\widehat m_{t-1}$ is the predictable nuisance estimate.
\begin{equation}\label{eq:Vhat}
\widehat V := \frac{1}{n_{\mathrm{eff}}-1}\sum_{t\in\mathcal{T}}(\widehat{\phi}_t-\widehat{\theta})^2,
\qquad\text{defined for } n_{\mathrm{eff}}\ge 2.
\end{equation}
\begin{remark}[The studentizer as realized quadratic variation]\label{rem:studentizer_qv}
Let $n_{\mathrm{eff}}:=|\mathcal{T}|$ and recall $\widehat V$ in \eqref{eq:Vhat} and $Q_{\mathcal{T}}$ in \eqref{eq:quadratic_variations}. A direct calculation gives
\begin{equation}\label{eq:vhat_qT}
(n_{\mathrm{eff}}-1)\widehat V
=
Q_{\mathcal{T}} - n_{\mathrm{eff}}(\widehat\theta-\theta_0)^2.
\end{equation}
Thus $\widehat V$ is essentially the realized quadratic variation $Q_{\mathcal{T}}$ up to the negligible centering term $n_{\mathrm{eff}}(\widehat\theta-\theta_0)^2/(n_{\mathrm{eff}}-1)$. Proposition~\ref{prop:feasibility} records a minimal condition under which $(n_{\mathrm{eff}}-1)\widehat V/Q_{\mathcal{T}}\to_p 1$.
\end{remark}

\begin{proposition}[Feasibility of the studentizer]\label{prop:feasibility}
Let $\xi_t:=\widehat\phi_t-\theta_0$ and $S_{\mathcal{T}}:=\sum_{t\in\mathcal{T}}\xi_t$, so that $\widehat\theta-\theta_0=S_{\mathcal{T}}/n_{\mathrm{eff}}$ and $Q_{\mathcal{T}}=\sum_{t\in\mathcal{T}}\xi_t^2$. If $n_{\mathrm{eff}}\to\infty$ and $S_{\mathcal{T}}/\sqrt{Q_{\mathcal{T}}}=O_p(1)$, then
\[
\frac{(n_{\mathrm{eff}}-1)\widehat V}{Q_{\mathcal{T}}}\to_p 1.
\]
In particular, the condition $S_{\mathcal{T}}/\sqrt{Q_{\mathcal{T}}}=O_p(1)$ holds whenever $S_{\mathcal{T}}/\sqrt{Q_{\mathcal{T}}}\Rightarrow \mathcal{N}(0,1)$.
\end{proposition}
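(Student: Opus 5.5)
The plan is to start from the algebraic identity \eqref{eq:vhat_qT} in Remark~\ref{rem:studentizer_qv}, verify it directly, and then show that the centering correction is negligible relative to $Q_{\mathcal{T}}$. Writing $\widehat\phi_t-\widehat\theta=\xi_t-(\widehat\theta-\theta_0)$ and expanding the square gives
\[
\sum_{t\in\mathcal{T}}(\widehat\phi_t-\widehat\theta)^2
=Q_{\mathcal{T}}-2(\widehat\theta-\theta_0)S_{\mathcal{T}}+n_{\mathrm{eff}}(\widehat\theta-\theta_0)^2
=Q_{\mathcal{T}}-\frac{S_{\mathcal{T}}^2}{n_{\mathrm{eff}}},
\]
because $\widehat\theta-\theta_0=S_{\mathcal{T}}/n_{\mathrm{eff}}$. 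The cross term then combines with the last term to give $-n_{\mathrm{eff}}(\widehat\theta-\theta_0)^2=-S_{\mathcal{T}}^2/n_{\mathrm{eff}}$, which is exactly \eqref{eq:vhat_qT}.

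Next, on the event $\{Q_{\mathcal{T}}>0\}$ I divide by $Q_{\mathcal{T}}$ to get
\[
\frac{(n_{\mathrm{eff}}-1)\widehat V}{Q_{\mathcal{T}}}=1-\frac{1}{n_{\mathrm{eff}}}\Big(\frac{S_{\mathcal{T}}}{\sqrt{Q_{\mathcal{T}}}}\Big)^2 .
\]
By hypothesis $(S_{\mathcal{T}}/\sqrt{Q_{\mathcal{T}}})^2=O_p(1)$, and $1/n_{\mathrm{eff}}\to0$ deterministically because $\mathcal{T}$ is nonrandom (Assumption~\ref{ass:T_det}). Their product is therefore $o_p(1)$, which gives the convergence in probability to $1$. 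Since $\mathcal{T}$ is deterministic, there are no random-index issues.

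The only subtle point is that the ratio must be well defined, that is, $Q_{\mathcal{T}}>0$. I will handle this by noting that the hypothesis $S_{\mathcal{T}}/\sqrt{Q_{\mathcal{T}}}=O_p(1)$ implicitly presupposes the ratio is defined with probability tending to one. Equivalently, one adopts a convention on $\{Q_{\mathcal{T}}=0\}$, where also $S_{\mathcal{T}}=0$ and $\widehat V=0$. Either reading makes this event asymptotically negligible, and it does not affect convergence in probability.

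For the final claim, weak convergence to $\mathcal{N}(0,1)$ implies tightness: by Prohorov's theorem, or directly from the fact that the limit is a proper distribution, for every $\eta>0$ there is an $M$ with $\limsup\Pp(|S_{\mathcal{T}}/\sqrt{Q_{\mathcal{T}}}|>M)\le\Pp(|N|\ge M)<\eta$. Hence $S_{\mathcal{T}}/\sqrt{Q_{\mathcal{T}}}=O_p(1)$.

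I expect no real obstacle here. The argument is entirely algebraic plus a Slutsky-type step. The only care needed is the degenerate event $Q_{\mathcal{T}}=0$, and this mild nondegeneracy will in any case be supplied by the variance growth condition used in Theorem~\ref{thm:studentized}.
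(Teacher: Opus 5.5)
Your proposal is correct and follows the paper's proof essentially verbatim: the identity \eqref{eq:vhat_qT} turns the ratio into $1-n_{\mathrm{eff}}^{-1}(S_{\mathcal{T}}/\sqrt{Q_{\mathcal{T}}})^2$, and an $O_p(1)$ term times $1/n_{\mathrm{eff}}\to 0$ is $o_p(1)$. Your explicit verification of the identity, your treatment of the event $\{Q_{\mathcal{T}}=0\}$, and your tightness argument for the final claim fill in details the paper leaves implicit, without changing the argument.
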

\begin{proof}
By \eqref{eq:vhat_qT},
\[
\frac{(n_{\mathrm{eff}}-1)\widehat V}{Q_{\mathcal{T}}}
=
1-\frac{1}{n_{\mathrm{eff}}}\Big(\frac{S_{\mathcal{T}}}{\sqrt{Q_{\mathcal{T}}}}\Big)^2,
\]
which converges to $1$ in probability under the stated conditions.
\end{proof}
\paragraph{Score sum notation.}
Define the centered increments $\xi_t:=\widehat\phi_t-\theta_0$ and the score sum
$S_{\mathcal{T}}:=\sum_{t\in\mathcal{T}}\xi_t=n_{\mathrm{eff}}(\widehat\theta-\theta_0)$.

\begin{equation}\label{eq:sehat}
\widehat{\mathrm{SE}} := \sqrt{\widehat V/n_{\mathrm{eff}}}.
\end{equation}
\begin{equation}\label{eq:ci}
\mathrm{CI}_{1-\alpha} = \Big[\widehat{\theta}\pm z_{1-\alpha/2}\,\widehat{\mathrm{SE}}\Big].
\end{equation}
\begin{remark}[Finite-$n$ reporting convention]\label{rem:t_vs_z}
For small $n_{\mathrm{eff}}$, it is common as a pragmatic finite-sample convention to also report a $t$-critical variant using $t_{0.975, n_{\mathrm{eff}}-1}$, namely $CI_t:=\widehat\theta\pm t_{0.975, n_{\mathrm{eff}}-1}\sqrt{\widehat V/n_{\mathrm{eff}}}$, where $\widehat V$ is the sample variance in \eqref{eq:Vhat}. This is a reporting convention; we do not claim finite-sample Student-$t$ validity, and it is not covered by Theorem~\ref{thm:studentized}.
\end{remark}

% ============================================================================
% 4. MAIN THEORETICAL RESULTS
% ============================================================================
\section{Main Theoretical Results}\label{sec:main_results}

\begin{remark}[Triangular-array and uniformity convention]\label{rem:triangular_uniform}
All objects may depend on the horizon $n$ (e.g.\ forward blocks and nuisance
fits). We suppress the index $n$ and assume overlap and moment constants are
uniform over $n$.
\end{remark}

This section isolates the logic behind predictable AIPW/DML in adaptive experiments. First, under logged executed propensities and predictable nuisance fits, the centered score is an exact martingale difference, yielding finite-sample conditional unbiasedness (Proposition~\ref{prop:identification} and Lemma~\ref{lem:mds}). Second, fixed-horizon inference follows from a self-normalized martingale CLT applied to the martingale sum, with feasibility provided by a sample-variance approximation (Theorem~\ref{thm:studentized}). Table~\ref{tab:assumption_ledger} in Appendix~\ref{app:assumptions} maps each assumption to each statement, and Appendix~\ref{app:martingale} states the martingale limit theorem invoked in the proofs.

\subsection{The Martingale Structure}\label{sec:martingale}

The central insight of our analysis is that forward cross-fitting induces a martingale difference structure. We first establish an identification result for the pseudo-outcome.

\begin{proposition}[Identification, including predictable random nuisances]\label{prop:identification}
Let $m_{t-1}=(m_{t-1,0},m_{t-1,1})$ be any (possibly random) pair of regression functions that is $\cF_{t-1}$-measurable. Under Assumptions~\ref{ass:iid}, \ref{ass:logging_integrity}, and~\ref{ass:unconf}, for each $t$,
\begin{equation}\label{eq:phi_ident}
\E\!\big[\phi_t(m_{t-1})\mid \cG_t\big]=\tau(X_t).
\end{equation}
Consequently, $\E[\phi_t(m_{t-1})]=\theta_0$.
\end{proposition}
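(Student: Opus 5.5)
The plan is to condition on $\cG_t$ and compute each term of $\phi_t(m_{t-1})$ separately, using consistency to rewrite the observed outcome in terms of potential outcomes. Since $m_{t-1}$ is $\cF_{t-1}$-measurable and $X_t,\pi_t$ are $\cG_t$-measurable, the quantities $m_{t-1,a}(X_t)$ and $\pi_t$ are $\cG_t$-measurable. By Assumption~\ref{ass:sutva}, $A_t Y_t = A_t Y_t(1)$ and $(1-A_t)Y_t=(1-A_t)Y_t(0)$. The score therefore becomes
\[
\phi_t(m_{t-1}) = \frac{A_t}{\pi_t}\big(Y_t(1)-m_{t-1,1}(X_t)\big)-\frac{1-A_t}{1-\pi_t}\big(Y_t(0)-m_{t-1,0}(X_t)\big)+m_{t-1,1}(X_t)-m_{t-1,0}(X_t).
\]
Integrability holds under Assumptions~\ref{ass:overlap}, \ref{ass:moments} and \ref{ass:nuis_stab} on scored units. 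For a general $t$, positivity $\pi_t\in(0,1)$ together with integrable $m_{t-1,a}(X_t)$ suffices, and I would state this as the implicit requirement.

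For the first term, pull out the $\cG_t$-measurable factors and use Assumption~\ref{ass:unconf} (conditional independence of $A_t$ and the potential outcomes given $\cG_t$) to factor the conditional expectation:
\[
\E[A_tY_t(1)\mid\cG_t]=\E[A_t\mid\cG_t]\,\E[Y_t(1)\mid\cG_t].
\]
Assumption~\ref{ass:logging_integrity} gives $\E[A_t\mid\cG_t]=\pi_t$. The first term therefore has conditional mean $\E[Y_t(1)\mid\cG_t]-m_{t-1,1}(X_t)$. The control-arm term is handled symmetrically and gives $\E[Y_t(0)\mid\cG_t]-m_{t-1,0}(X_t)$. Adding the terms, the $m_{t-1}$ contributions cancel exactly, leaving
\[
\E[\phi_t(m_{t-1})\mid\cG_t]=\E[Y_t(1)-Y_t(0)\mid\cG_t].
\]

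The main obstacle is showing that $\E[Y_t(a)\mid\cG_t]=m_a^\star(X_t)$, given that $\cG_t=\sigma(\cF_{t-1},X_t,\pi_t)$.
\begin{enumerate}
\item $\cF_{t-1}$ is generated by past units' data. By Assumption~\ref{ass:iid}, those data are functions of the past potential outcomes and covariates together with the assignment randomness.
\item The assignment randomness and any platform-side randomness are exogenous to unit $t$'s potential outcomes. Definition~\ref{def:propensity} treats the randomness used to form $\pi_t$ as independent of $(Y_t(0),Y_t(1))$ given $(\cF_{t-1},X_t)$.
\item Hence $(Y_t(0),Y_t(1))$ is conditionally independent of $(\cF_{t-1},\pi_t)$ given $X_t$.
\end{enumerate}
I would formalize this by two successive applications of conditional independence. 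First, $\E[Y_t(a)\mid \cF_{t-1},X_t,\pi_t]=\E[Y_t(a)\mid\cF_{t-1},X_t]$ by the exogeneity of the propensity randomness. Second, this equals $\E[Y_t(a)\mid X_t]=m_a^\star(X_t)$, because $(X_t,Y_t(0),Y_t(1))$ is independent of $\cF_{t-1}$. Making that independence rigorous requires the implicit convention that the policy's internal randomization draws are independent of future units' potential outcomes, and I would state this explicitly. It follows that $\E[\phi_t(m_{t-1})\mid\cG_t]=m_1^\star(X_t)-m_0^\star(X_t)=\tau(X_t)$.

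Finally, taking expectations and using the tower property gives $\E[\phi_t(m_{t-1})]=\E[\tau(X_t)]=\theta_0$, since $X_t$ has the superpopulation law by Assumption~\ref{ass:iid}.
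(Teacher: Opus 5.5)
Your proposal is correct and follows essentially the same route as the paper: condition on $\cG_t$, use sequential randomization and $\E[A_t\mid\cG_t]=\pi_t$ to reduce each inverse-weighted term to $\E[Y_t(a)\mid\cG_t]-m_{t-1,a}(X_t)$, and then invoke i.i.d.\ arrivals together with the exogeneity clause of Definition~\ref{def:propensity} to obtain $\E[Y_t(a)\mid\cG_t]=m_a^\star(X_t)$. The paper's proof also relies on three points that you state explicitly: consistency, integrability, and the requirement that past assignment randomness be independent of unit $t$'s covariates and potential outcomes.
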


\begin{proof}
Fix $t$ and condition on $\cG_t=\sigma(\cF_{t-1},X_t,\pi_t)$. Since $m_{t-1}$ is
$\cF_{t-1}$-measurable and $\cF_{t-1}\subseteq \cG_t$, the function $m_{t-1}$
is fixed under this conditioning.

By sequential randomization (Assumption~\ref{ass:unconf}),
\[
\E[Y_t \mid \cG_t, A_t=1]=\E[Y_t(1)\mid \cG_t],\qquad
\E[Y_t \mid \cG_t, A_t=0]=\E[Y_t(0)\mid \cG_t].
\]
By i.i.d.\ arrivals (Assumption~\ref{ass:iid}) and the exogeneity condition in Definition~\ref{def:propensity}, $(Y_t(0),Y_t(1))\indep (\cF_{t-1},\pi_t)\mid X_t$ and thus $\E[Y_t(a)\mid X_t,\cF_{t-1},\pi_t]=\E[Y_t(a)\mid X_t]=m_a^\star(X_t)$ for $a\in\{0,1\}$.

Using $\E[A_t\mid \cG_t]=\pi_t$ and the tower property,
\begin{align*}
\E\!\left[\frac{A_t(Y_t-m_{t-1,1}(X_t))}{\pi_t}\,\Big|\,\cG_t\right]
&=
\E\!\left[\frac{A_t}{\pi_t}\,\E[Y_t-m_{t-1,1}(X_t)\mid \cG_t,A_t]\,\Big|\,\cG_t\right]\\
&=
\E\!\left[\frac{A_t}{\pi_t}\,(m_1^\star(X_t)-m_{t-1,1}(X_t))\,\Big|\,\cG_t\right]\\
&= m_1^\star(X_t)-m_{t-1,1}(X_t),
\end{align*}
and similarly
\[
\E\!\left[\frac{(1-A_t)(Y_t-m_{t-1,0}(X_t))}{1-\pi_t}\,\Big|\,\cG_t\right]
= m_0^\star(X_t)-m_{t-1,0}(X_t).
\]
Plugging into the definition of $\phi_t(m_{t-1})$ in \eqref{eq:phi_def} yields
\begin{align*}
\E[\phi_t(m_{t-1})\mid \cG_t]
&= (m_{t-1,1}(X_t)-m_{t-1,0}(X_t))
+(m_1^\star(X_t)-m_{t-1,1}(X_t))\\
&\quad -(m_0^\star(X_t)-m_{t-1,0}(X_t))\\
&= m_1^\star(X_t)-m_0^\star(X_t)
=\tau(X_t),
\end{align*}
which proves \eqref{eq:phi_ident}. Taking unconditional expectations gives
$\E[\phi_t(m_{t-1})]=\E[\tau(X_t)]=\theta_0$.
\end{proof}

The key consequence is that the pseudo-outcome, when evaluated with any predictable nuisance estimates, remains conditionally unbiased.

\begin{lemma}[Martingale difference structure]\label{lem:mds}
Suppose Assumptions~\ref{ass:iid}, \ref{ass:logging_integrity}, \ref{ass:unconf}, \ref{ass:moments}, and \ref{ass:nuis_stab} hold.
Let $\{\widehat m_{t-1}\}$ satisfy Assumption~\ref{ass:predictable_nuis}.
For each scored index $t\in\mathcal{T}$, define $\widehat\phi_t:=\phi_t(\widehat m_{t-1})$ and $\xi_t:=\widehat\phi_t-\theta_0$.
Then
\begin{equation}\label{eq:mds}
\E[\widehat\phi_t \mid \cF_{t-1}] = \theta_0,
\end{equation}
and $\{\xi_t,\cF_t\}$ is a martingale difference sequence over the scored indices.
\end{lemma}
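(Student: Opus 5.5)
The plan is to obtain \eqref{eq:mds} from Proposition~\ref{prop:identification} by a tower argument over the nested $\sigma$-fields $\cF_{t-1}\subseteq\cG_t$. First we need integrability, so that the conditional expectations exist and $\xi_t$ is a genuine martingale difference. For $t\in\mathcal{T}$, write $\widehat\phi_t$ as in \eqref{eq:phi_def}. Executed overlap (Assumption~\ref{ass:overlap}) gives $|A_t/\pi_t|\le\varepsilon^{-1}$ and $|(1-A_t)/(1-\pi_t)|\le\varepsilon^{-1}$. Hence
\[
|\widehat\phi_t|\le (1+\varepsilon^{-1})\sum_{a\in\{0,1\}}\big(|Y_t(a)|+|\widehat m_{t-1,a}(X_t)|\big).
\]
Assumption~\ref{ass:moments} bounds $\E|Y_t(a)|^4$, using the i.i.d.\ arrivals in Assumption~\ref{ass:iid}. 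Assumption~\ref{ass:nuis_stab} bounds $\E|\widehat m_{t-1,a}(X_t)|^4$. Together these give $\sup_t\E|\widehat\phi_t|^4<\infty$, and in particular $\widehat\phi_t\in L^1$. The same bound yields the square integrability used later.

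Next, Assumption~\ref{ass:predictable_nuis} makes $\widehat m_{t-1}$ an $\cF_{t-1}$-measurable random pair of regression functions. This is exactly the hypothesis of Proposition~\ref{prop:identification}, which gives $\E[\widehat\phi_t\mid\cG_t]=\tau(X_t)$. Since $\cF_{t-1}\subseteq\cG_t$, the tower property gives
\[
\E[\widehat\phi_t\mid\cF_{t-1}]=\E[\tau(X_t)\mid\cF_{t-1}].
\]
By Assumption~\ref{ass:iid}, $X_t$ is independent of the past arrivals $(X_s,Y_s(0),Y_s(1))_{s<t}$. I would argue that $X_t$ is also independent of $\cF_{t-1}$: each $Z_s$ with $s<t$ is built from earlier arrivals, earlier assignment draws, and exogenous platform randomness, and none of these involve unit $t$. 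Then $\E[\tau(X_t)\mid\cF_{t-1}]=\E[\tau(X)]=\theta_0$, which is \eqref{eq:mds}. Alternatively, one can simply invoke the mean-stationarity condition flagged after Assumption~\ref{ass:iid}.

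Finally, adaptedness follows because $\widehat\phi_t$ is a measurable function of $Z_t$ and the $\cF_{t-1}$-measurable function $\widehat m_{t-1}$, so it is $\cF_t$-measurable. Combined with integrability and \eqref{eq:mds}, this shows $\E[\xi_t\mid\cF_{t-1}]=0$. Since $\mathcal{T}$ is deterministic (Assumption~\ref{ass:T_det}), restricting to scored indices preserves the MDS property relative to $\{\cF_t\}$.

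The main obstacle is the independence step $X_t\indep\cF_{t-1}$. It requires that the assignment randomization and platform randomness for earlier units do not depend on future arrivals. This is implicit in the timing of Remark~\ref{rem:timing} and the exogeneity clause of Definition~\ref{def:propensity}, but it is not stated verbatim in the assumptions. I would make it explicit, or fall back on the mean-stationarity formulation.
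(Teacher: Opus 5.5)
Your core argument is the paper's proof: condition on $\cG_t\supseteq\cF_{t-1}$, apply Proposition~\ref{prop:identification} to the predictable pair $\widehat m_{t-1}$ to get $\tau(X_t)$, then use independence of $X_t$ from $\cF_{t-1}$ to reduce $\E[\tau(X_t)\mid\cF_{t-1}]$ to $\theta_0$.

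The one step that does not go through under the lemma's stated hypotheses is integrability. You bound the inverse weights by $\varepsilon^{-1}$ using Assumption~\ref{ass:overlap}, but overlap is not among the lemma's hypotheses. The paper's proof needs only a first moment and defers the overlap-based fourth-moment bound to Lemma~\ref{lem:moment_bounds}. Without overlap, $\pi_t$ can approach $0$ or $1$, so your pointwise bound on $|\widehat\phi_t|$ is unavailable. Your claim $\sup_t\E|\widehat\phi_t|^4<\infty$ can then fail. Indeed $\E[\xi_t^2]\ge\E[\sigma_1^2(X_t)/\pi_t]$ by Proposition~\ref{prop:var_decomp}, and this can be infinite when $\pi_t$ puts enough mass near $0$.

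You can get $L^1$ by conditioning instead of by a sup bound. Since $\pi_t$ and $\widehat m_{t-1,1}(X_t)$ are $\cG_t$-measurable, sequential randomization and $\E[A_t\mid\cG_t]=\pi_t$ give
\[
\E\big[A_t\lvert Y_t(1)-\widehat m_{t-1,1}(X_t)\rvert\mid\cG_t\big]=\pi_t\,\E\big[\lvert Y_t(1)-\widehat m_{t-1,1}(X_t)\rvert\mid\cG_t\big].
\]
The inverse weight therefore cancels, and
\[
\E\Big[\frac{A_t}{\pi_t}\,\big|Y_t(1)-\widehat m_{t-1,1}(X_t)\big|\Big]
=\E\big|Y_t(1)-\widehat m_{t-1,1}(X_t)\big|
\le \E|Y(1)|+\E|\widehat m_{t-1,1}(X_t)|<\infty
\]
by Assumptions~\ref{ass:moments} and~\ref{ass:nuis_stab}. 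The control arm is handled the same way, and $\widehat m_{t-1,1}(X_t)-\widehat m_{t-1,0}(X_t)$ is integrable by Assumption~\ref{ass:nuis_stab}. With this replacement, your argument proves the lemma under exactly its stated hypotheses.

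Your caveat about $X_t\indep\cF_{t-1}$ is well taken, and you are more careful here than the paper, which asserts it directly ``by Assumption~\ref{ass:iid}.'' Independence across arrivals does not by itself rule out earlier assignment or platform randomness depending on later arrivals. The no-anticipation requirement you describe is therefore the implicit ingredient. Falling back on the mean-stationarity condition of Remark~\ref{rem:weaker_iid}, however, replaces a hypothesis rather than proving the lemma as stated.
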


\begin{proof}
(A finite first moment suffices here; this follows from Assumption~\ref{ass:moments}. Higher-moment bounds used later are provided by Lemma~\ref{lem:moment_bounds}.)
Using iterated expectations and Proposition~\ref{prop:identification},
\[
\E[\phi_t(\widehat m_{t-1})\mid \cF_{t-1}]
=
\E\!\big[\E[\phi_t(\widehat m_{t-1})\mid \cG_t]\mid \cF_{t-1}\big]
=
\E[\tau(X_t)\mid \cF_{t-1}].
\]
By Assumption~\ref{ass:iid}, $X_t$ is independent of $\cF_{t-1}$, so $\E[\tau(X_t) \mid \cF_{t-1}] = \E[\tau(X)] = \theta_0$.
\end{proof}

\begin{corollary}[Finite-sample unbiasedness]\label{cor:unbiased}
Under Assumptions~\ref{ass:iid}, \ref{ass:logging_integrity}, \ref{ass:unconf}, \ref{ass:T_det}, \ref{ass:moments}, \ref{ass:nuis_stab}, and predictable nuisances as in Assumption~\ref{ass:predictable_nuis}, we have $\E[\widehat{\theta}]=\theta_0$.
\end{corollary}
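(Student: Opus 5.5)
The plan is to reduce the claim to Lemma~\ref{lem:mds} plus linearity of expectation over the deterministic scored set. By \eqref{eq:phi_hat},
\[
\widehat\theta-\theta_0=\frac{1}{n_{\mathrm{eff}}}\sum_{t\in\mathcal{T}}\xi_t,\qquad \xi_t:=\widehat\phi_t-\theta_0 .
\]
Assumption~\ref{ass:T_det} makes both $\mathcal{T}$ and $n_{\mathrm{eff}}$ nonrandom. We may therefore pull $1/n_{\mathrm{eff}}$ outside the expectation and exchange the finite sum with it, which gives $\E[\widehat\theta]-\theta_0=n_{\mathrm{eff}}^{-1}\sum_{t\in\mathcal{T}}\E[\xi_t]$.

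It then suffices to show $\E[\xi_t]=0$ for each $t\in\mathcal{T}$. Lemma~\ref{lem:mds} applies under exactly the listed assumptions, including the predictability in Assumption~\ref{ass:predictable_nuis}. It gives $\E[\widehat\phi_t\mid\cF_{t-1}]=\theta_0$, and the tower property then yields $\E[\widehat\phi_t]=\theta_0$. Summing over $t\in\mathcal{T}$ gives $\E[\widehat\theta]=\theta_0$.

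The only point needing care is integrability, so that the expectations and the exchange with the finite sum are legitimate. I would first check that $\E|\widehat\phi_t|<\infty$ for each scored $t$. By Assumption~\ref{ass:overlap}, $1/\pi_t$ and $1/(1-\pi_t)$ are bounded by $1/\varepsilon$ on $\mathcal{T}$. Hence
\[
|\widehat\phi_t|\le(1+\varepsilon^{-1})\Big(|Y_t|+|\widehat m_{t-1,1}(X_t)|+|\widehat m_{t-1,0}(X_t)|\Big),
\]
and the right-hand side is integrable by Assumptions~\ref{ass:moments} and~\ref{ass:nuis_stab}, since $|Y_t|\le|Y_t(0)|+|Y_t(1)|$.

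Overlap is not among the corollary's listed assumptions, and it is needed for this integrability bound. If it is not implicitly maintained, Lemma~\ref{lem:mds} is simply taken as given, since it already asserts \eqref{eq:mds}. The step I expect to be the main obstacle is therefore this integrability bookkeeping, not any probabilistic difficulty; the unbiasedness itself follows at once from the martingale-difference property.
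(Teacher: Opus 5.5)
Your proposal is correct and is essentially the paper's proof, which is just linearity over the deterministic scored set together with Lemma~\ref{lem:mds} and the tower property. On your integrability aside, overlap is not actually needed: conditioning on $\cG_t$ and using Assumption~\ref{ass:unconf} gives $\E\big[A_t|Y_t-\widehat m_{t-1,1}(X_t)|/\pi_t\big]=\E|Y_t(1)-\widehat m_{t-1,1}(X_t)|<\infty$ (and likewise for the control term), so Assumptions~\ref{ass:moments} and~\ref{ass:nuis_stab} alone give $\E|\widehat\phi_t|<\infty$, and your fallback to the lemma is unnecessary.
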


\begin{proof}
By linearity and Lemma~\ref{lem:mds},
\[
\E[\widehat\theta]
=
\frac{1}{n_{\mathrm{eff}}}\sum_{t\in\mathcal{T}}
\E[\phi_t(\widehat m_{t-1})]
=
\frac{1}{n_{\mathrm{eff}}}\sum_{t\in\mathcal{T}} \theta_0
=
\theta_0.
\]
\end{proof}

\begin{remark}[Interpretation]\label{rem:mds_interpretation}
Lemma~\ref{lem:mds} is the conceptual replacement for fold-wise independence in i.i.d.\ DML. Rather than requiring approximate independence between nuisance estimation and score evaluation, we exploit exact conditional unbiasedness given the past. This martingale structure holds for any quality of nuisance estimates, provided they are predictable.
\end{remark}

\begin{lemma}[Moment bounds]\label{lem:moment_bounds}
Suppose Assumptions~\ref{ass:moments}, \ref{ass:nuis_stab}, and \ref{ass:overlap} hold. Then there exists a constant $C<\infty$ such that
\[
\sup_{n\ge 1}\sup_{t\in\mathcal{T}} \E\big[|\widehat\phi_t|^4\big]\le C.
\]
\end{lemma}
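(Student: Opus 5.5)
The plan is to bound $|\widehat\phi_t|$ pointwise by a sum of a few terms, each of which has a uniformly bounded fourth moment. Then $c_r$-type inequalities (convexity of $x\mapsto|x|^4$) combine these into a single uniform constant.

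\emph{Pointwise bound.} Start from the definition \eqref{eq:phi_def} with $m=\widehat m_{t-1}$ and $\pi=\pi_t$, and use $A_t\in\{0,1\}$ together with the executed overlap in Assumption~\ref{ass:overlap}. For $t\in\mathcal{T}$ this gives $1/\pi_t\le 1/\varepsilon$ and $1/(1-\pi_t)\le 1/\varepsilon$ almost surely. Hence
\[
|\widehat\phi_t|
\le \frac{1}{\varepsilon}\big(|Y_t|+|\widehat m_{t-1,1}(X_t)|+|Y_t|+|\widehat m_{t-1,0}(X_t)|\big)
+|\widehat m_{t-1,1}(X_t)|+|\widehat m_{t-1,0}(X_t)|.
\]
We can sharpen this slightly, since only one of $A_t$ and $1-A_t$ is nonzero, but the crude version suffices. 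Collecting terms, $|\widehat\phi_t|\le c_\varepsilon\big(|Y_t|+|\widehat m_{t-1,0}(X_t)|+|\widehat m_{t-1,1}(X_t)|\big)$ with $c_\varepsilon:=2/\varepsilon+1$.

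\emph{Fourth moments.} By the elementary inequality $(a+b+c)^4\le 27(a^4+b^4+c^4)$, we get
\[
\E|\widehat\phi_t|^4\le 27c_\varepsilon^4\big(\E|Y_t|^4+\E|\widehat m_{t-1,0}(X_t)|^4+\E|\widehat m_{t-1,1}(X_t)|^4\big).
\]
The last two terms are bounded uniformly in $n$ and $t\in\mathcal{T}$ by Assumption~\ref{ass:nuis_stab}.

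For the first term, SUTVA (Assumption~\ref{ass:sutva}) gives $|Y_t|\le |Y_t(0)|+|Y_t(1)|$, so $\E|Y_t|^4\le 8(\E|Y_t(0)|^4+\E|Y_t(1)|^4)$. By the identical-distribution part of Assumption~\ref{ass:iid}, this equals $8(\E|Y(0)|^4+\E|Y(1)|^4)<\infty$ by Assumption~\ref{ass:moments}. This bound does not depend on $t$ or $n$. The fact that $A_t$ is chosen adaptively does not matter here, because we bound $|Y_t|$ by both potential outcomes rather than by the selected one.

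\emph{Uniformity.} The only point needing care is uniformity over $n$ and $t$, i.e., over the triangular array of Remark~\ref{rem:triangular_uniform}. The constant $\varepsilon$ in Assumption~\ref{ass:overlap} is uniform by that convention, the constant in Assumption~\ref{ass:nuis_stab} carries an explicit $\sup_n\sup_{t}$, and the outcome moments come from the fixed superpopulation. Taking suprema therefore yields a finite $C$.

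I expect no genuine obstacle. The main thing to state clearly is that overlap is used only on scored indices $t\in\mathcal{T}$, which is exactly the range over which the supremum is taken.
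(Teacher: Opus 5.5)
Your proposal is correct and follows essentially the same route as the paper: overlap bounds the inverse-propensity weights on $t\in\mathcal{T}$, then an elementary convexity inequality reduces everything to fourth moments controlled by Assumptions~\ref{ass:moments} and \ref{ass:nuis_stab}. The only difference is that the paper applies $(a+b+c+d)^4\le 4^3(a^4+b^4+c^4+d^4)$ directly to the four terms of \eqref{eq:phi_def} rather than first collecting into three terms; your explicit use of SUTVA and identical distribution to pass from $\E|Y_t|^4$ to $\E|Y(a)|^4$ is a detail the paper's one-line proof leaves implicit, since those standing assumptions are not listed in the lemma's hypotheses.
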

\begin{proof}
Fix $t\in\mathcal{T}$. By overlap, $\pit\in[\varepsilon,1-\varepsilon]$ almost surely, so the inverse-propensity weights are uniformly bounded on scored indices. Applying $(a+b+c+d)^4\le 4^3(a^4+b^4+c^4+d^4)$ to the score formula \eqref{eq:phi_def} (and hence to \eqref{eq:phi_hat}) and using Assumptions~\ref{ass:moments} and \ref{ass:nuis_stab} yields the claimed uniform bound.
\end{proof}

\begin{remark}[Weaker than i.i.d.]\label{rem:weaker_iid}
Lemma~\ref{lem:mds} uses Assumption~\ref{ass:iid} only through the implication $\E[\tau(X_t)\mid \cF_{t-1}]=\theta_0$.
Thus the MDS argument extends to any arrival process satisfying this mean-stationarity condition.
\end{remark}

\subsection{The Variance Structure}\label{sec:variance_structure}

Before stating the CLT, we analyze the variance structure. Define the realized and predictable quadratic variations:
\begin{equation}\label{eq:quadratic_variations}
Q_{\mathcal{T}} := \sum_{t\in\mathcal{T}} \xi_t^2, \qquad
V_{\mathcal{T}}^2 := \sum_{t\in\mathcal{T}} \E[\xi_t^2 \mid \cF_{t-1}].
\end{equation}

\begin{proposition}[Conditional second-moment decomposition and oracle benchmark]\label{prop:var_decomp}
Suppose Assumptions~\ref{ass:iid}, \ref{ass:logging_integrity}, and \ref{ass:seq_rand} hold. Let $\sigma_a^2(X_t):=\Var(Y_t(a)\mid X_t)$ and define the regression bias terms $b_a(x):=m_a(x)-m_a^\star(x)$. Then for each $t\in\mathcal{T}$,
\begin{equation}\label{eq:general_second_moment}
\E[(\phi_t(m)-\theta_0)^2 \mid \cG_t]
=
(\tau(X_t)-\theta_0)^2
+\frac{\sigma_1^2(X_t)}{\pit}
+\frac{\sigma_0^2(X_t)}{1-\pit}
+\pit(1-\pit)\left(\frac{b_1(X_t)}{\pit}+\frac{b_0(X_t)}{1-\pit}\right)^2.
\end{equation}
Moreover, since $\E[\phi_t(m)\mid \cG_t]=\tau(X_t)$ for any $m$ that is $\cF_{t-1}$-measurable, the last three terms in \eqref{eq:general_second_moment} equal $\Var(\phi_t(m)\mid \cG_t)$. In particular, relative to the oracle score $\phi_t(m^\star)$ (for which $b_0=b_1\equiv 0$), using a misspecified $m$ adds a nonnegative augmentation term to the conditional second moment and conditional variance.
\end{proposition}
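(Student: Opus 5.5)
Throughout, $m=(m_0,m_1)$ is $\cF_{t-1}$-measurable, so it is fixed under conditioning on $\cG_t$. The plan is to split the centered score into a $\cG_t$-measurable mean part and a conditionally centered noise part, and then compute the conditional second moment of the noise by conditioning further on $A_t$.

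\textbf{Step 1 (mean and noise parts).} Write
\[
\phi_t(m)-\theta_0=(\tau(X_t)-\theta_0)+\big(\phi_t(m)-\tau(X_t)\big).
\]
By Proposition~\ref{prop:identification}, $\E[\phi_t(m)\mid\cG_t]=\tau(X_t)$. The first summand is $\cG_t$-measurable, so the cross term has zero conditional mean. Hence
\[
\E[(\phi_t(m)-\theta_0)^2\mid\cG_t]=(\tau(X_t)-\theta_0)^2+\Var(\phi_t(m)\mid\cG_t).
\]
This also gives the ``moreover'' claim: the last three terms of \eqref{eq:general_second_moment} equal the conditional variance.

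\textbf{Step 2 (decompose the score).} Write $\varepsilon_{a,t}:=Y_t(a)-m_a^\star(X_t)$. By Assumption~\ref{ass:sutva}, $A_tY_t=A_tY_t(1)$ and $(1-A_t)Y_t=(1-A_t)Y_t(0)$. This gives
\[
\phi_t(m)-\tau(X_t)=\underbrace{\frac{A_t}{\pi_t}\varepsilon_{1,t}-\frac{1-A_t}{1-\pi_t}\varepsilon_{0,t}}_{U_t}
+\underbrace{\Big(1-\frac{A_t}{\pi_t}\Big)b_1(X_t)-\Big(1-\frac{1-A_t}{1-\pi_t}\Big)b_0(X_t)}_{W_t}.
\]
Since $A_t\in\{0,1\}$,
\[
1-\frac{A_t}{\pi_t}=-\frac{A_t-\pi_t}{\pi_t},\qquad 1-\frac{1-A_t}{1-\pi_t}=\frac{A_t-\pi_t}{1-\pi_t}.
\]
Therefore
\[
W_t=-(A_t-\pi_t)\Big(\frac{b_1}{\pi_t}+\frac{b_0}{1-\pi_t}\Big).
\]

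\textbf{Step 3 (conditional moments).}
\begin{itemize}
\item \emph{Cross term.} As in the proof of Proposition~\ref{prop:identification}, use Assumption~\ref{ass:unconf} together with the i.i.d.\ arrivals and the exogeneity in Definition~\ref{def:propensity}. This gives $\E[\varepsilon_{a,t}\mid\cG_t,A_t]=0$. Since $W_t$ is $\sigma(\cG_t,A_t)$-measurable, $\E[U_tW_t\mid\cG_t]=0$.
\item \emph{The $U_t$ term.} The indicators satisfy $A_t(1-A_t)=0$, so $U_t^2=\frac{A_t}{\pi_t^2}\varepsilon_{1,t}^2+\frac{1-A_t}{(1-\pi_t)^2}\varepsilon_{0,t}^2$. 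The same conditional-independence facts give $\E[\varepsilon_{a,t}^2\mid\cG_t,A_t]=\sigma_a^2(X_t)$. Then using $\E[A_t\mid\cG_t]=\pi_t$ (Assumption~\ref{ass:logging_integrity}),
\[
\E[U_t^2\mid\cG_t]=\frac{\sigma_1^2}{\pi_t}+\frac{\sigma_0^2}{1-\pi_t}.
\]
\item \emph{The $W_t$ term.} Since $\E[(A_t-\pi_t)^2\mid\cG_t]=\pi_t(1-\pi_t)$,
\[
\E[W_t^2\mid\cG_t]=\pi_t(1-\pi_t)\Big(\frac{b_1}{\pi_t}+\frac{b_0}{1-\pi_t}\Big)^2.
\]
\end{itemize}
Summing these with Step 1 yields \eqref{eq:general_second_moment}. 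For the oracle $m^\star$ we have $b_0=b_1\equiv0$, so $W_t\equiv0$. The remaining term $\E[W_t^2\mid\cG_t]$ is a square times $\pi_t(1-\pi_t)>0$, hence nonnegative, which gives the augmentation claim.

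\textbf{Main obstacle.} The delicate step is justifying $\E[\varepsilon_{a,t}^k\mid\cG_t,A_t]$ for $k=1,2$. This needs the potential outcomes to be independent of $(\cF_{t-1},\pi_t,A_t)$ given $X_t$. I would obtain it by combining Assumption~\ref{ass:unconf} (independence from $A_t$ given $\cG_t$) with the conditional independence $(Y_t(0),Y_t(1))\indep(\cF_{t-1},\pi_t)\mid X_t$ already used in Proposition~\ref{prop:identification}. Finiteness of all moments involved follows from Assumption~\ref{ass:moments}.
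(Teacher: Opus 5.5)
Your proof is correct and follows the paper's own argument. Both split $\phi_t(m)-\tau(X_t)$ into the noise part $U_t$ and the bias part $-(A_t-\pi_t)\bigl(b_1(X_t)/\pi_t+b_0(X_t)/(1-\pi_t)\bigr)$, then take second moments conditional on $\cG_t$ and add $(\tau(X_t)-\theta_0)^2$. You also spell out what the paper leaves implicit: the vanishing cross terms, and the conditional independence $(Y_t(0),Y_t(1))\indep(\cF_{t-1},\pi_t,A_t)\mid X_t$ that justifies $\E[\varepsilon_{a,t}^k\mid\cG_t,A_t]$ for $k=1,2$.
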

\begin{proof}
Define $\varepsilon_{t,a}:=Y_t(a)-m_a^\star(X_t)$, so that $\E[\varepsilon_{t,a}\mid X_t]=0$ and $\E[\varepsilon_{t,a}^2\mid X_t]=\sigma_a^2(X_t)$. A direct expansion using sequential randomization yields
\[
\phi_t(m)-\tau(X_t)
=
\frac{A_t}{\pit}\varepsilon_{t,1}-\frac{1-A_t}{1-\pit}\varepsilon_{t,0}
-\frac{A_t-\pit}{\pit}b_1(X_t)-\frac{A_t-\pit}{1-\pit}b_0(X_t),
\]
from which \eqref{eq:general_second_moment} follows by taking conditional second moments given $\cG_t$ and adding $(\tau(X_t)-\theta_0)^2$. Finally, since $\E[\phi_t(m)\mid \cG_t]=\tau(X_t)$ for any $m$ that is $\cF_{t-1}$-measurable (in particular, any predictable learner), the last three terms in \eqref{eq:general_second_moment} are all centered at the oracle benchmark $\tau(X_t)$, and the proof is complete.
\end{proof}

\begin{remark}[Oracle case and variance inflation are immediate]\label{rem:oracle_from_decomp}
Setting $m=m^\star$ in Proposition~\ref{prop:var_decomp} removes the
nonnegative augmentation term and yields the oracle conditional second moment.
The factors $1/\pi_t$ and $1/(1-\pi_t)$ make explicit how extreme propensities
inflate uncertainty, motivating overlap enforcement at the design stage.
\end{remark}

\begin{remark}[Why learn $m$ without rates?]\label{rem:why_learn_m}
Theorem~\ref{thm:studentized} yields valid prespecified-horizon coverage under predictability and logged propensities without requiring that $\widehat m_{t-1,a}$ converges to $m_a^\star$ at any particular rate. Learning $m$ is therefore not needed for validity; it is needed for precision. Proposition~\ref{prop:var_decomp} shows that the oracle score $\phi_t(m^\star)$ removes a nonnegative augmentation term from the conditional second moment, providing a natural benchmark for how nuisance quality affects the size of the studentizer. The oracle regression $m^\star$ minimizes the conditional variance within the AIPW score family (Proposition~\ref{prop:var_decomp}), and the augmentation term shrinks as $m$ approaches $m^\star$. This does not imply monotone tightening relative to an arbitrary baseline across learning iterations. There is no monotonic guarantee relative to an arbitrary baseline. A poorly chosen nuisance can increase or decrease variance compared to another misspecified choice.
\end{remark}

\subsection{Studentized Asymptotic Normality}\label{sec:studentized_clt}

We now state our main inferential result. The key ingredient is that the
realized quadratic variation $Q_{\mathcal{T}}$ and its predictable counterpart
$V_{\mathcal{T}}^2$ are asymptotically equivalent under mild conditions,
enabling studentization with the feasible sample variance $\widehat V$.

\paragraph{Roadmap.}
Lemma~\ref{lem:mds} identifies the scored sum $S_{\mathcal{T}}$ as a martingale sum with predictable quadratic variation $V_{\mathcal{T}}^2$ and realized quadratic variation $Q_{\mathcal{T}}$; Assumption~\ref{ass:T_det} ensures that the scored indicator $\ind{t\in\mathcal{T}}$ is deterministic (hence predictable). To apply the self-normalized martingale CLT in Theorem~\ref{thm:selfnorm}, we verify: (i) $V_{\mathcal{T}}^2\to\infty$ (Assumption~\ref{ass:var_growth}); (ii) $Q_{\mathcal{T}}/V_{\mathcal{T}}^2\to_p 1$ via a standard martingale variance-ratio argument, using the uniform fourth-moment bound from Lemma~\ref{lem:moment_bounds}; and (iii) a Lyapunov condition, again by Lemma~\ref{lem:moment_bounds} together with Assumption~\ref{ass:var_growth}. Finally, Proposition~\ref{prop:feasibility} justifies replacing $Q_{\mathcal{T}}$ by $(n_{\mathrm{eff}}-1)\widehat V$ in the Studentized statistic.

\begin{assumption}[Variance growth / nondegeneracy]\label{ass:var_growth}
As $n\to\infty$, $n_{\mathrm{eff}}:=|\mathcal{T}|\to\infty$ and there exists $v_->0$ such that
\[
\Pp\left(V_{\mathcal{T}}^2 \ge v_- n_{\mathrm{eff}}\right)\to 1.
\]
This is a nondegeneracy lower bound ensuring the predictable quadratic variation grows at least linearly in $n_{\mathrm{eff}}$; it does not impose variance stabilization.
\end{assumption}

A simple primitive sufficient condition for Assumption~\ref{ass:var_growth} is that outcome noise is nondegenerate and overlap holds under the standing causal/sequential assumptions of Section~\ref{sec:model}. The following corollary records a sufficient condition; see Appendix~\ref{sec:var_growth_sufficient} for the proof.

\begin{corollary}[A sufficient condition for variance growth]\label{cor:var_growth_suff}
Suppose Assumptions~\ref{ass:iid}, \ref{ass:logging_integrity}, \ref{ass:seq_rand}, \ref{ass:overlap}, and \ref{ass:noise} hold, where Assumption~\ref{ass:noise} (Appendix) is the nondegeneracy condition $\E[\sigma_1^2(X)+\sigma_0^2(X)]\ge \underline{\sigma}^2>0$. Then Assumption~\ref{ass:var_growth} holds.
\end{corollary}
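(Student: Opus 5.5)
The plan is to lower-bound each conditional second moment $\E[\xi_t^2\mid\cF_{t-1}]$ by a deterministic constant times the noise level, and then sum over $t\in\mathcal{T}$. Two ingredients do the work: the decomposition in Proposition~\ref{prop:var_decomp} and the independence of $X_t$ from the past.

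First, fix $t\in\mathcal{T}$. Applying Proposition~\ref{prop:var_decomp} with $m=\widehat m_{t-1}$, which is $\cF_{t-1}$-measurable by Assumption~\ref{ass:predictable_nuis}, gives
\[
\E[\xi_t^2\mid\cG_t]\;\ge\;\frac{\sigma_1^2(X_t)}{\pi_t}+\frac{\sigma_0^2(X_t)}{1-\pi_t}\;\ge\;\sigma_1^2(X_t)+\sigma_0^2(X_t).
\]
Here we drop the nonnegative terms $(\tau(X_t)-\theta_0)^2$ and the augmentation term. The second inequality uses only $\pi_t,1-\pi_t\le 1$, so overlap is not even needed for this lower bound; overlap and Lemma~\ref{lem:moment_bounds} serve only to guarantee that the conditional moments are finite.

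Next, take $\E[\cdot\mid\cF_{t-1}]$ and use the tower property with $\cF_{t-1}\subseteq\cG_t$. By Assumption~\ref{ass:iid}, $X_t$ is independent of $\cF_{t-1}$: the past $Z_s$ for $s<t$ are functions of earlier arrivals and of exogenous platform randomness. The policy randomness at time $t$ does not enter $\sigma_a^2(X_t)$. Hence
\[
\E[\xi_t^2\mid\cF_{t-1}]\;\ge\;\E[\sigma_1^2(X)+\sigma_0^2(X)]\;\ge\;\underline\sigma^2 .
\]
The last inequality is Assumption~\ref{ass:noise}.

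Summing over $t\in\mathcal{T}$ gives $V_{\mathcal{T}}^2\ge \underline\sigma^2 n_{\mathrm{eff}}$ almost surely. Taking $v_-:=\underline\sigma^2$ then gives $\Pp(V_{\mathcal{T}}^2\ge v_-n_{\mathrm{eff}})=1$ for every $n$. The companion requirement $n_{\mathrm{eff}}\to\infty$ is part of the standing setup: under forward cross-fitting with a burn-in block whose share is bounded away from one, $n_{\mathrm{eff}}\to\infty$ as $n\to\infty$. I would state this explicitly as part of the hypothesis being verified.

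The main obstacle is justifying the conditional-independence step rigorously. We need $\E[\sigma_a^2(X_t)\mid\cF_{t-1}]=\E[\sigma_a^2(X)]$. This requires that $\cF_{t-1}$ be generated by $\{(X_s,Y_s(0),Y_s(1))\}_{s<t}$ together with platform randomness independent of $(X_t,Y_t(\cdot))$. That is the content of the i.i.d.-arrival, no-selection assumption together with the exogeneity clause in Definition~\ref{def:propensity}, and I would spell it out.

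A second, smaller point is that Proposition~\ref{prop:var_decomp} is stated for a fixed $m$. Extending it to the random $\widehat m_{t-1}$ is legitimate because $\widehat m_{t-1}$ is $\cG_t$-measurable, so it is frozen under the conditioning exactly as in Proposition~\ref{prop:identification}.
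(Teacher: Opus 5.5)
Your proof is correct and is essentially the paper's own argument. Like the paper, you apply Proposition~\ref{prop:var_decomp} with the predictable $\widehat m_{t-1}$, drop the nonnegative terms, take $\E[\cdot\mid\cF_{t-1}]$ using i.i.d.\ arrivals, and sum over $\mathcal{T}$ to get $V_{\mathcal{T}}^2\ge v_- n_{\mathrm{eff}}$ almost surely. The only difference is minor: the paper uses overlap to bound $1/\pi_t,1/(1-\pi_t)\ge 1/(1-\varepsilon)$ and takes $v_-=\underline{\sigma}^2/(1-\varepsilon)$, whereas your bound $1/\pi_t,1/(1-\pi_t)\ge 1$ gives $v_-=\underline{\sigma}^2$ and correctly shows that overlap is not needed for this step.
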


\begin{remark}[Nondegeneracy and numerical guardrails]\label{rem:Vhat_pos}
Assumption~\ref{ass:var_growth} implies the quadratic variation grows, so
$\widehat V$ is bounded away from zero with high probability asymptotically.
In finite samples, if $\widehat V=0$ (e.g.\ constant outcomes), inference is
uninformative; report this as a design/measurement degeneracy.
\end{remark}

\begin{theorem}[Studentized fixed-horizon inference without variance stabilization]\label{thm:studentized}
Suppose Assumptions~\ref{ass:iid}, \ref{ass:logging_integrity}, \ref{ass:unconf}, \ref{ass:overlap}, \ref{ass:predictable_nuis}, \ref{ass:T_det}, \ref{ass:moments}, \ref{ass:nuis_stab}, and \ref{ass:var_growth} hold. Let $n_{\mathrm{eff}}:=|\mathcal{T}|$ and recall $S_{\mathcal{T}}=\sum_{t\in\mathcal{T}}(\widehat\phi_t-\theta_0)$ and $Q_{\mathcal{T}}=\sum_{t\in\mathcal{T}}(\widehat\phi_t-\theta_0)^2$. Then
\begin{equation}\label{eq:studentized_clt_q}
\frac{S_{\mathcal{T}}}{\sqrt{Q_{\mathcal{T}}}}\Rightarrow \mathcal{N}(0,1),
\end{equation}
and the practical Studentized statistic satisfies
\begin{equation}\label{eq:studentized_clt}
\frac{\sqrt{n_{\mathrm{eff}}}(\widehat\theta-\theta_0)}{\widehat V^{1/2}}\Rightarrow \mathcal{N}(0,1),
\end{equation}
where $\widehat V$ is defined in \eqref{eq:Vhat}. Consequently, the Wald interval \eqref{eq:ci} has asymptotically correct coverage at the prespecified horizon, without requiring variance stabilization (i.e., $V_{\mathcal{T}}^2/n_{\mathrm{eff}}\to V$ deterministic; Assumption~\ref{ass:stab}).
\end{theorem}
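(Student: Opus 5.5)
The plan follows the roadmap above. By Lemma~\ref{lem:mds}, together with Assumptions~\ref{ass:predictable_nuis} and~\ref{ass:T_det}, $\{\xi_t,\cF_t\}_{t\in\mathcal{T}}$ is a martingale difference array. Its predictable quadratic variation is $V_{\mathcal{T}}^2$ and its realized quadratic variation is $Q_{\mathcal{T}}$. Lemma~\ref{lem:moment_bounds} gives the uniform bound $\sup_t\E[\xi_t^4]\le C'$, where we use $|\theta_0|<\infty$ and $(a+b)^4\le 8(a^4+b^4)$. We will apply the self-normalized martingale CLT (Theorem~\ref{thm:selfnorm}), which needs three inputs: $V_{\mathcal{T}}^2\to\infty$, a Lyapunov/conditional Lindeberg condition relative to $V_{\mathcal{T}}^2$, and $Q_{\mathcal{T}}/V_{\mathcal{T}}^2\to_p 1$. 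Its output is $S_{\mathcal{T}}/V_{\mathcal{T}}\Rightarrow\mathcal{N}(0,1)$ with mixing over the random regime. With the ratio condition this upgrades to $S_{\mathcal{T}}/\sqrt{Q_{\mathcal{T}}}\Rightarrow\mathcal{N}(0,1)$, which is \eqref{eq:studentized_clt_q}. We never need a deterministic limit for $V_{\mathcal{T}}^2/n_{\mathrm{eff}}$. Self-normalization by a random bracket is exactly what Theorem~\ref{thm:selfnorm} accommodates, since the normalization is by the process's own variation rather than by a constant.

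\textbf{Variance growth and Lyapunov.} Assumption~\ref{ass:var_growth} gives $V_{\mathcal{T}}^2\ge v_-n_{\mathrm{eff}}\to\infty$ with probability tending to one. For Lyapunov with exponent $4$, note that
\[
\E\Big[\sum_{t\in\mathcal{T}}\E[\xi_t^4\mid\cF_{t-1}]\Big]\le C'n_{\mathrm{eff}}.
\]
Hence $\sum_t\E[\xi_t^4\mid\cF_{t-1}]=O_p(n_{\mathrm{eff}})$, and dividing by $V_{\mathcal{T}}^4\ge v_-^2n_{\mathrm{eff}}^2$ on the high-probability event gives $O_p(1/n_{\mathrm{eff}})\to_p0$. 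The conditional Lindeberg condition follows from this.

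\textbf{Ratio condition (main obstacle).} Write
\[
Q_{\mathcal{T}}-V_{\mathcal{T}}^2=\sum_{t\in\mathcal{T}}\eta_t,\qquad \eta_t:=\xi_t^2-\E[\xi_t^2\mid\cF_{t-1}].
\]
This is itself a martingale difference sequence, and $\E[\eta_t^2]\le\E[\xi_t^4]\le C'$. By orthogonality of martingale increments, $\E[(Q_{\mathcal{T}}-V_{\mathcal{T}}^2)^2]=\sum_t\E[\eta_t^2]\le C'n_{\mathrm{eff}}$, so Chebyshev gives $Q_{\mathcal{T}}-V_{\mathcal{T}}^2=O_p(\sqrt{n_{\mathrm{eff}}})$. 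Dividing by $V_{\mathcal{T}}^2\ge v_-n_{\mathrm{eff}}$, again on the event from Assumption~\ref{ass:var_growth}, gives $Q_{\mathcal{T}}/V_{\mathcal{T}}^2-1=O_p(n_{\mathrm{eff}}^{-1/2})\to_p0$. This is the only step that uses the fourth-moment bound essentially. The obstacle is care with the random denominator and with the triangular-array indexing (Remark~\ref{rem:triangular_uniform}), which we handle by intersecting with the events $\{V_{\mathcal{T}}^2\ge v_-n_{\mathrm{eff}}\}$, whose probabilities tend to one.

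\textbf{Feasible statistic.} Since $S_{\mathcal{T}}/\sqrt{Q_{\mathcal{T}}}\Rightarrow\mathcal{N}(0,1)$, it is $O_p(1)$, so Proposition~\ref{prop:feasibility} yields $(n_{\mathrm{eff}}-1)\widehat V/Q_{\mathcal{T}}\to_p1$. Now
\[
\frac{\sqrt{n_{\mathrm{eff}}}(\widehat\theta-\theta_0)}{\widehat V^{1/2}}=\frac{S_{\mathcal{T}}}{\sqrt{Q_{\mathcal{T}}}}\cdot\Big(\frac{Q_{\mathcal{T}}}{(n_{\mathrm{eff}}-1)\widehat V}\Big)^{1/2}\Big(\frac{n_{\mathrm{eff}}-1}{n_{\mathrm{eff}}}\Big)^{1/2}.
\]
The first factor converges to $\mathcal{N}(0,1)$ and the remaining two factors converge to $1$ in probability, so Slutsky gives \eqref{eq:studentized_clt}. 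Coverage of \eqref{eq:ci} then follows because $\Pp(|T_n|\le z_{1-\alpha/2})\to1-\alpha$, using continuity of the normal cdf.
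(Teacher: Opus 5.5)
Your argument is the paper's own proof almost line for line. It uses the same martingale-difference setup, the same Chebyshev bound on the martingale $Q_{\mathcal{T}}-V_{\mathcal{T}}^2$ restricted to the event $\{V_{\mathcal{T}}^2\ge v_-n_{\mathrm{eff}}\}$, essentially the same fourth-moment Lyapunov check, and the same passage to $\widehat V$ via Proposition~\ref{prop:feasibility} and Slutsky. You even track the factor $(n_{\mathrm{eff}}-1)/n_{\mathrm{eff}}$ that the paper leaves implicit. The trouble is the step that you and the paper both treat as a black box. Conditions (i)--(iii) of Theorem~\ref{thm:selfnorm} do not imply $S_{\mathcal{T}}/\sqrt{Q_{\mathcal{T}}}\Rightarrow\mathcal{N}(0,1)$. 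Your gloss of its output, ``$S_{\mathcal{T}}/V_{\mathcal{T}}\Rightarrow\mathcal{N}(0,1)$ with mixing over the random regime,'' is precisely the claim that is never verified. Random-norming (mixed-normal) martingale CLTs, such as those in Hall and Heyde (1980, Ch.~3), also require $V_{\mathcal{T}}^2/n_{\mathrm{eff}}\pto\eta^2$ for some a.s.\ positive random $\eta^2$, with filtrations nested in $n$. Without such a condition, the variance regime can be correlated with a non-negligible part of $S_{\mathcal{T}}$ itself, and the mixture is then not standard normal.

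Here is an instance that satisfies every hypothesis of the theorem. Take no covariates, $m\equiv 0$, $Y(0)\sim\mathcal{N}(0,1)$, $Y(1)\sim\mathcal{N}(0,9)$ (so $\theta_0=0$), and $\mathcal{T}=\{1,\dots,n\}$. Set $\pi_t=1/2$ for $t\le n/2$. For $t>n/2$, set $\pi_t=0.8$ if the partial score sum $S_{n/2}\ge 0$ and $\pi_t=0.2$ otherwise. Overlap holds with $\varepsilon=0.2$, $V_{\mathcal{T}}^2\ge 16.25\,n$, and your verifications of (i)--(iii) go through verbatim. Yet $S_{n/2}/\sqrt{10n}\Rightarrow Z_1$. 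Conditionally on $\cF_{n/2}$, the remaining sum divided by $\sqrt{c\,n/2}$ converges to $Z_2$, which is independent of $Z_1$; here $c=c(Z_1)$, with $c(z)=16.25$ for $z\ge 0$ and $c(z)=46.25$ for $z<0$. Hence
\[
\frac{S_{\mathcal{T}}}{\sqrt{Q_{\mathcal{T}}}}\Rightarrow W:=\frac{\sqrt{20}\,Z_1+\sqrt{c(Z_1)}\,Z_2}{\sqrt{20+c(Z_1)}},\qquad
\E W=\frac{\sqrt{20}}{\sqrt{2\pi}}\Big(\frac{1}{\sqrt{36.25}}-\frac{1}{\sqrt{66.25}}\Big)\approx 0.077\neq 0 .
\]
So the statement as written, and Theorem~\ref{thm:selfnorm} with it, fails; no rearrangement of your verifications can prove it. You need an extra hypothesis such as the convergence-plus-nesting condition above. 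Design~A satisfies it: there the regime is $\cF_{n_0}$-measurable with $n_0$ fixed, so given $\cF_{n_0}$ the scored variance is asymptotically deterministic. Under such a condition, your three checks plus the feasibility step do complete the proof.
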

\begin{proof}
Let $\xi_t:=\ind{t\in\mathcal{T}}(\widehat\phi_t-\theta_0)$ and $S_{\mathcal{T}}:=\sum_{t\in\mathcal{T}}\xi_t$. Under Assumptions~\ref{ass:iid}, \ref{ass:logging_integrity}, \ref{ass:unconf}, \ref{ass:predictable_nuis}, and \ref{ass:T_det}, Lemma~\ref{lem:mds} implies that $(\xi_t,\mathcal{F}_t)$ is a martingale difference array. Assumption~\ref{ass:var_growth} gives $V_{\mathcal{T}}^2:=\sum_{t\in\mathcal{T}}\E[\xi_t^2\mid \mathcal{F}_{t-1}]\to\infty$. Lemma~\ref{lem:moment_bounds} (which uses Assumptions~\ref{ass:moments}, \ref{ass:nuis_stab}, and \ref{ass:overlap}) provides a uniform fourth-moment bound for the scored increments; together with Assumption~\ref{ass:var_growth}, a standard martingale variance-ratio calculation (see Appendix~\ref{app:proofs}) verifies the variance-ratio and Lyapunov conditions required by Theorem~\ref{thm:selfnorm}. Therefore Theorem~\ref{thm:selfnorm} yields $S_{\mathcal{T}}/\sqrt{Q_{\mathcal{T}}}\Rightarrow \mathcal{N}(0,1)$, i.e., \eqref{eq:studentized_clt_q}. Finally, Proposition~\ref{prop:feasibility} implies $(n_{\mathrm{eff}}-1)\widehat V/Q_{\mathcal{T}}\to_p 1$, and Slutsky's theorem yields \eqref{eq:studentized_clt}.
\end{proof}

\begin{remark}[Validity vs.\ precision and consistency]\label{rem:no_rates}
The studentized CLT in Theorem~\ref{thm:studentized} does not require
$\widehat m_{t-1}$ to converge to $m^\star$ for validity. Predictability
(Assumption~\ref{ass:predictable_nuis}) together with overlap and moment/stability conditions (Assumptions~\ref{ass:overlap}--\ref{ass:nuis_stab}) and variance growth (Assumption~\ref{ass:var_growth}) are
sufficient. Nuisance learning is nonetheless valuable: by
Proposition~\ref{prop:var_decomp} the oracle regression $m^\star$ minimizes the conditional variance within the AIPW score family (Proposition~\ref{prop:var_decomp}), and the augmentation term shrinks as $m$ approaches $m^\star$; this does not imply monotone tightening relative to an arbitrary baseline across learning iterations. In particular, $\widehat\theta$
is consistent whenever $n_{\mathrm{eff}}\to\infty$, and $\widehat V$ is a feasible
studentizer even when $V_{\mathcal{T}}^2/n_{\mathrm{eff}}$ does not stabilize.
\end{remark}

\begin{remark}[Relation to stabilization-based and anytime-valid approaches]\label{rem:stabilization}\label{rem:optional_stopping}
Some fixed-horizon CLTs in the adaptive-experiment literature normalize by a deterministic asymptotic variance, imposing stabilization/design-stability conditions ensuring that an average conditional variance (or, equivalently, $V_{\mathcal{T}}^2/n_{\mathrm{eff}}$) converges to a deterministic limit \citep{hadad2021confidence,zhan2021off,katoYasuiMcAlinn2021,cook2024semiparametric,sengupta2025designstability,zenati2025kernel}. Such assumptions are appropriate when the deployed policy stabilizes or when one explicitly engineers stability.

Theorem~\ref{thm:studentized} takes a different route. We normalize by the realized quadratic variation, so validity does not require a deterministic variance limit. This directly accommodates regimes where propensities oscillate, converge to random limits, or otherwise fail to stabilize, while still delivering a conventional fixed-horizon Wald interval. Finally, Theorem~\ref{thm:studentized} is a prespecified-horizon result. If the experiment is continuously monitored and potentially stopped early based on the data, time-uniform methods such as confidence sequences and time-uniform CLTs are needed \citep{howard2021confidence,waudbySmithRamdas2024betting,waudby2024timeuniform}. Because our score increments form an exact MDS, the setup is compatible with standard time-uniform martingale methods, but we do not develop anytime-valid inference here.
\end{remark}

Appendix~\ref{app:additional_results} collects three extensions not needed for the main methods-note claim:
(i) a non-studentized CLT under variance stabilization;
(ii) oracle equivalence under $L^2$ nuisance consistency; and
(iii) a primitive sufficient condition for variance growth.

\section{Simulation Study}\label{sec:simulation}
\normalsize

We compare fixed-horizon 95\% Wald intervals of the form \eqref{eq:ci} under different normalizations and nuisance-fitting regimes, depending on the design:
\begin{enumerate}[leftmargin=2em,label=(CI\arabic*)]
\item \textbf{SN (self-normalized):} the proposed studentized Wald interval \eqref{eq:ci}.
\item \textbf{Fixed-$V$ (stabilization-style baseline):} a Wald interval that replaces the (potentially random) long-run variance by a fixed constant $V_{\mathrm{fix}}$,
\[
\widehat\theta \pm z_{1-\alpha/2}\sqrt{V_{\mathrm{fix}}/n_{\mathrm{eff}}},
\]
thereby behaving as if Assumption~\ref{ass:stab} held with deterministic limit. We specify $V_{\mathrm{fix}}$ explicitly in each design.
\item \textbf{Leaky baselines (Designs C1 and D):} nuisance fits that violate predictability (Assumption~\ref{ass:predictable_nuis}) by using contemporaneous or future outcomes when scoring unit $t$. When computed, the score is still \eqref{eq:phi_def} and the CI is \eqref{eq:ci}, but the fixed-horizon guarantee of Theorem~\ref{thm:studentized} does not apply.
\item \textbf{Regime-aware Fixed-$V$ (Design A only):} an infeasible regime-aware fixed-variance benchmark that plugs in the correct regime-specific $V_{\mathrm{fix}}$ based on the burn-in sign.
\end{enumerate}

For each design we report: (i) two-sided 95\% coverage; (ii) average CI length; and (iii) where informative, mean bias and/or null rejection rates. Monte Carlo standard errors for coverage are \mcse{$\sqrt{\hat p(1-\hat p)/R}$}. Tables report $n$ and $n_{\mathrm{eff}}$ explicitly. Unless noted, we use $R=1000$ replications and $n\in\{250,500,1000,2000,5000\}$.

\subsection{Design A: random long-run variance}\label{sec:sim_A}

During burn-in, $\pi_t=0.5$ for $t\le n_0=50$, and we compute the burn-in estimate
\[
\widehat\tau_{\mathrm{burn}}
:=\frac{1}{n_0}\sum_{t=1}^{n_0}\left(\frac{A_tY_t}{\pi_t}-\frac{(1-A_t)Y_t}{1-\pi_t}\right)
\]
using only burn-in units. After burn-in, if $\widehat\tau_{\mathrm{burn}}\ge 0$ then $\pi_t=0.8$, else $\pi_t=0.2$. Outcomes: $Y(0)\sim\mathcal{N}(0,1)$ and $Y(1)=\varepsilon_1$ with $\varepsilon_1\sim\mathcal{N}(0,9)$. There are no covariates; with $m_0\equiv m_1\equiv 0$, the AIPW score \eqref{eq:phi_def} reduces to the usual inverse-propensity weighted (IPW) score.

On $\mathcal{T}=\{n_0+1,\dots,n\}$ we have $n_{\mathrm{eff}}=n-n_0$ and $\pi_t$ is constant after burn-in, so the oracle conditional variance converges to a random limit:
\[
\frac{V_{\mathcal{T}}^2}{n_{\mathrm{eff}}}\ \to\
\begin{cases}
9/0.8 + 1/0.2 = 16.25, & \widehat\tau_{\mathrm{burn}}\ge 0,\\
9/0.2 + 1/0.8 = 46.25, & \widehat\tau_{\mathrm{burn}}<0.
\end{cases}
\]
Thus Assumption~\ref{ass:stab} fails (no deterministic variance limit). The SN CI remains valid under Theorem~\ref{thm:studentized} because it self-normalizes by realized quadratic variation. SN uses \eqref{eq:ci}. Fixed-$V$ sets $V_{\mathrm{fix}}:=31.25$ (the unconditional mean of the two variance limits above). Regime-aware Fixed-$V$ plugs in $V_{\mathrm{fix}}=16.25$ or $46.25$ depending on the realized burn-in sign.

\begin{table}[!htbp]
\centering
\small
\begin{threeparttable}
\caption{Design A: coverage and length of 95\% CIs.}
\label{tab:simA_main}
\begin{tabular}{@{}l r l r r r@{}}
\toprule
{$n$} & {$n_{\mathrm{eff}}$} & {Method} & {Coverage} & {MCSE} & {Avg.\ length} \\
\midrule
250 & 200 & Fixed-V & 0.944 & \mcse{0.007} & 1.549 \\
250 & 200 & Regime-Fixed & 0.950 & \mcse{0.007} & 1.515 \\
250 & 200 & SN & 0.955 & \mcse{0.007} & 1.506 \\
500 & 450 & Fixed-V & 0.939 & \mcse{0.008} & 1.033 \\
500 & 450 & Regime-Fixed & 0.950 & \mcse{0.007} & 1.003 \\
500 & 450 & SN & 0.957 & \mcse{0.006} & 1.002 \\
1000 & 950 & Fixed-V & 0.938 & \mcse{0.008} & 0.711 \\
1000 & 950 & Regime-Fixed & 0.948 & \mcse{0.007} & 0.697 \\
1000 & 950 & SN & 0.948 & \mcse{0.007} & 0.696 \\
2000 & 1950 & Fixed-V & 0.947 & \mcse{0.007} & 0.496 \\
2000 & 1950 & Regime-Fixed & 0.941 & \mcse{0.007} & 0.477 \\
2000 & 1950 & SN & 0.946 & \mcse{0.007} & 0.476 \\
5000 & 4950 & Fixed-V & 0.941 & \mcse{0.007} & 0.311 \\
5000 & 4950 & Regime-Fixed & 0.960 & \mcse{0.006} & 0.303 \\
5000 & 4950 & SN & 0.959 & \mcse{0.006} & 0.303 \\
\bottomrule
\end{tabular}
\begin{tablenotes}[flushleft]
\footnotesize
\item Notes: $R=1000$ replications; $n\in\{250,500,1000,2000,5000\}$; burn-in $n_0=50$ (so $n_{\mathrm{eff}}=n-50$). SN is the studentized interval \eqref{eq:ci}. Fixed-$V$ uses $V_{\mathrm{fix}}=31.25$, the unconditional mean of the regime-specific variance limits. Regime-aware Fixed-$V$ is an infeasible oracle benchmark that plugs in the correct regime-specific variance term (i.e., $\sigma_1^2=9$ and $\sigma_0^2=1$, so the regime-specific limits are $9/0.8+1/0.2=16.25$ and $9/0.2+1/0.8=46.25$), and uses $z$-critical values.
\end{tablenotes}
\end{threeparttable}
\end{table}

Table~\ref{tab:simA_main} reports marginal coverage and average length for Design~A as $n$ increases. In this design, the studentized SN interval remains close to the nominal $95\%$ level across horizons, consistent with Theorem~\ref{thm:studentized}. The fixed-$V$ interval normalizes by a single deterministic target $V$ that averages over the two post--burn-in propensity regimes. As a consequence, its marginal coverage can be close to nominal while still exhibiting regime-dependent over/under coverage (Table~\ref{tab:simA_conditional}). The regime-aware fixed-$V$ benchmark, which plugs in the correct regime-specific variance constant, illustrates that classical normalization works once one conditions on (and correctly accounts for) the realized variance regime. The SN interval achieves this conditioning automatically via studentization.

\begin{table}[t]
\centering
\small
\begin{threeparttable}
\begin{tabular}{lccccc}
\toprule
Method & Post burn-in $\pi$ & $n=250$ & $n=500$ & $n=1000$ & $n=2000$ \\
\midrule
SN & $0.8$ & 0.96 & 0.97 & 0.95 & 0.95 \\
SN & $0.2$ & 0.95 & 0.95 & 0.94 & 0.94 \\
Fixed-$V$ & $0.8$ & 1.00 & 0.99 & 0.99 & 1.00 \\
Fixed-$V$ & $0.2$ & 0.89 & 0.89 & 0.89 & 0.89 \\
\bottomrule
\end{tabular}
\begin{tablenotes}
\small
\item Notes: Each row conditions on the realized post--burn-in propensity in Design~A: if $\widehat\tau_{\mathrm{burn}}\ge 0$ then $\pi_t=0.8$ for $(t>n_0)$, and if $\widehat\tau_{\mathrm{burn}}<0$ then $\pi_t=0.2$ for $(t>n_0)$.
\end{tablenotes}
\end{threeparttable}
\caption{Conditional coverage by realized propensity regime (Design~A)}\label{tab:simA_conditional}
\end{table}

Design~A illustrates the basic phenomenon targeted by Theorem~\ref{thm:studentized}: the variance proxy $V_{\mathcal{T}}^2/n_{\mathrm{eff}}$ does not converge to a single deterministic constant because it depends on the realized post--burn-in propensity regime. Table~\ref{tab:simA_conditional} shows that a fixed-$V$ normalization can be conservative when the realized regime has low variance ($\pi_t=0.8$) and anti-conservative when the realized regime has high variance ($\pi_t=0.2$). In contrast, the SN interval studentizes by the realized quadratic variation and maintains near-nominal conditional coverage in both regimes.

\subsection{Design B: stabilization benchmark}\label{sec:sim_B}

Design~B provides a benchmark setting in which the conditional variance stabilizes. We adopt the outcome model from Design~A with $\tau=0$, $Y(0)\sim\mathcal{N}(0,1)$, and $Y(1)=\varepsilon_1$ where $\varepsilon_1\sim\mathcal{N}(0,9)$. Treatment is assigned with a constant executed propensity $\pi_t\equiv 0.6$ for all $t\le n$, so there is no burn-in and $n_{\mathrm{eff}}=n$. Because there are no covariates, the AIPW score coincides with the IPW score. In this setting, Assumption~\ref{ass:stab} holds with a deterministic long-run variance,
\[
V_{\mathrm{fix}} \;=\; \frac{9}{0.6} + \frac{1}{0.4} \;=\; 17.5,
\]
and therefore the SN and Fixed-$V$ intervals are asymptotically equivalent.

\begin{table}[!htbp]
\centering
\small
\begin{threeparttable}
\caption{Design B (stabilization benchmark): coverage and length of 95\% CIs.}
\label{tab:simB}
\begin{tabular}{@{}l r l r r r@{}}
\toprule
{$n$} & {$n_{\mathrm{eff}}$} & {Method} & {Coverage} & {MCSE} & {Avg.\ length} \\
\midrule
250 & 250 & Fixed-V & 0.954 & \mcse{0.007} & 1.037 \\
250 & 250 & SN & 0.952 & \mcse{0.007} & 1.038 \\
500 & 500 & Fixed-V & 0.949 & \mcse{0.007} & 0.733 \\
500 & 500 & SN & 0.953 & \mcse{0.007} & 0.733 \\
1000 & 1000 & Fixed-V & 0.957 & \mcse{0.006} & 0.519 \\
1000 & 1000 & SN & 0.955 & \mcse{0.007} & 0.518 \\
2000 & 2000 & Fixed-V & 0.953 & \mcse{0.007} & 0.367 \\
2000 & 2000 & SN & 0.953 & \mcse{0.007} & 0.366 \\
5000 & 5000 & Fixed-V & 0.947 & \mcse{0.007} & 0.232 \\
5000 & 5000 & SN & 0.947 & \mcse{0.007} & 0.232 \\
\bottomrule
\end{tabular}
\begin{tablenotes}[flushleft]
\footnotesize
\item Notes: $R=1000$ replications; $n\in\{250,500,1000,2000,5000\}$; $n_{\mathrm{eff}}=n$. SN is \eqref{eq:ci}. Fixed-$V$ uses $V_{\mathrm{fix}}=17.5$.
\end{tablenotes}
\end{threeparttable}
\end{table}

Table~\ref{tab:simB} shows that when the variance stabilizes, SN and Fixed-$V$ give nearly identical coverage and length. Differences are within Monte Carlo error, consistent with the deterministic $V_{\mathrm{fix}}$.

\subsection{Design C1: leakage stress test}\label{sec:sim_C1}

Design~C1 is a stress test for Assumption~\ref{ass:predictable_nuis} and Remark~\ref{rem:predictability_cf}. Covariates are $X_t\in\mathbb{R}^{p}$ with $p=20$ and i.i.d.\ $X_t\sim\mathcal{N}(0,I_p)$. Potential outcomes follow
\[
Y_t(0)=m_0(X_t)+\varepsilon_{t0},\qquad
Y_t(1)=m_0(X_t)+\tau(X_t)+\varepsilon_{t1},
\]
where $m_0(x)$ is nonlinear and $\tau(x)=\tau_0+\delta\sin(x_1)$ so the superpopulation ATE equals $\theta_0=\E[\tau(X_t)]=\tau_0$. We report calibration at $\tau_0=0$ with $\delta=0.2$; errors are independent $t_{30}$ draws rescaled to unit variance (finite moments, but heavier tails than Gaussian).

To induce adaptive feedback, we split $\{1,\ldots,n\}$ into $K=5$ contiguous blocks and use the first block as burn-in with $\pi_t\equiv 0.5$. On subsequent blocks, the executed propensity is
\[
\pi_t=\varepsilon+(1-2\varepsilon)\expit\big(\lambda\,\widehat\tau_{t-1}(X_t)\big),\qquad
\widehat\tau_{t-1}(x)=\widehat m_{t-1,1}(x)-\widehat m_{t-1,0}(x),
\]
with $\varepsilon=0.1$ and $\lambda=2.5$. The predictable nuisance $\widehat m_{t-1,a}$ is fit by ridge regression on a low-dimensional feature map using only prior blocks and held fixed within the current block (forward fitting). The leaky baseline fits a richer ridge model once on the full sample, violating predictability.

\begin{table}[!htbp]
\centering
\small
\begin{threeparttable}
\caption{Design C1 (predictability / leakage stress test, $\tau_0=0$): coverage, length, and null rejection.}
\label{tab:simC1}
\begin{tabular}{@{}l r l r r r r@{}}
\toprule
{$n$} & {$n_{\mathrm{eff}}$} & {Method} & {Coverage} & {MCSE} & {Avg.\ length} & {Reject rate} \\
\midrule
250 & 200 & SN-AIPW-Predictable & 0.952 & \mcse{0.007} & 0.653 & 0.048 \\
250 & 200 & SN-AIPW-LeakyFull & 0.821 & \mcse{0.012} & 1.864 & 0.179 \\
250 & 200 & SN-IPW & 0.939 & \mcse{0.008} & 0.766 & 0.061 \\
500 & 400 & SN-AIPW-Predictable & 0.953 & \mcse{0.007} & 0.429 & 0.047 \\
500 & 400 & SN-AIPW-LeakyFull & 0.887 & \mcse{0.010} & 0.398 & 0.113 \\
500 & 400 & SN-IPW & 0.958 & \mcse{0.006} & 0.525 & 0.042 \\
1000 & 800 & SN-AIPW-Predictable & 0.948 & \mcse{0.007} & 0.290 & 0.052 \\
1000 & 800 & SN-AIPW-LeakyFull & 0.928 & \mcse{0.008} & 0.274 & 0.072 \\
1000 & 800 & SN-IPW & 0.947 & \mcse{0.007} & 0.360 & 0.053 \\
2000 & 1600 & SN-AIPW-Predictable & 0.957 & \mcse{0.006} & 0.202 & 0.043 \\
2000 & 1600 & SN-AIPW-LeakyFull & 0.944 & \mcse{0.007} & 0.195 & 0.056 \\
2000 & 1600 & SN-IPW & 0.962 & \mcse{0.006} & 0.251 & 0.038 \\
5000 & 4000 & SN-AIPW-Predictable & 0.944 & \mcse{0.007} & 0.126 & 0.056 \\
5000 & 4000 & SN-AIPW-LeakyFull & 0.941 & \mcse{0.007} & 0.124 & 0.059 \\
5000 & 4000 & SN-IPW & 0.951 & \mcse{0.007} & 0.157 & 0.049 \\
\bottomrule
\end{tabular}
\begin{tablenotes}[flushleft]
\footnotesize
\item Notes: $R=1000$ replications; $K=5$ contiguous blocks with the first block omitted from scoring (so $n_{\mathrm{eff}}=4n/5$). The executed propensity is $\pi_t=\varepsilon+(1-2\varepsilon)\expit(\lambda\,\widehat\tau_{t-1}(X_t))$ with $\varepsilon=0.1$ and $\lambda=2.5$. ``SN-AIPW-Predictable'' uses forward (past-only) ridge nuisances; ``SN-AIPW-LeakyFull'' fits a full-sample ridge nuisance on richer features and is not predictable. Reject rate is the two-sided rejection probability of $H_0:\theta_0=0$ based on the corresponding 95\% CI.
\end{tablenotes}
\end{threeparttable}
\end{table}

Table~\ref{tab:simC1} makes the predictability requirement operational: the predictable SN-AIPW procedure remains near nominal, while the leaky full-sample nuisance substantially undercovers and over-rejects at smaller horizons, consistent with the warning in Remark~\ref{rem:predictability_cf}.

\subsection{Design C2: nuisance quality}\label{sec:sim_C2}

Design~C2 isolates how nuisance quality affects precision without confounding from adaptivity. Covariates $X_t\in\mathbb{R}^5$ follow a correlated normal with AR(1) correlation $\rho=0.5$; treatment is assigned with constant propensity $\pi_t\equiv 0.5$. Potential outcomes are linear with homoskedastic noise:
\[
Y_t(0)=X_{t1}+X_{t2}+\varepsilon_{t0},\qquad
Y_t(1)=X_{t1}+X_{t2}+\tau+\varepsilon_{t1},
\qquad \varepsilon_{t0},\varepsilon_{t1}\stackrel{\text{i.i.d.}}{\sim}\mathcal{N}(0,1),
\]
so the oracle regression removes the augmentation term in Proposition~\ref{prop:var_decomp}. We use $K=10$ forward blocks (first block omitted) and report results at $n=5000$ under $\tau=0$.

\begin{table}[!htbp]
\centering
\small
\begin{threeparttable}
\caption{Design C2 (precision vs nuisance quality, $n=5000$): coverage and relative length/variance vs oracle.}
\label{tab:simC2}
\begin{tabular}{@{}l r l r r r r r@{}}
\toprule
{$n$} & {$n_{\mathrm{eff}}$} & {Method} & {Coverage} & {MCSE} & {Avg.\ length} & {Rel.\ length} & {Rel.\ $\widehat V$} \\
\midrule
5000 & 4500 & SN-AIPW-Oracle & 0.955 & \mcse{0.007} & 0.117 & 1.000 & 1.000 \\
5000 & 4500 & SN-AIPW-WellSpec & 0.951 & \mcse{0.007} & 0.117 & 1.002 & 1.004 \\
5000 & 4500 & SN-AIPW-Misspec & 0.956 & \mcse{0.006} & 0.155 & 1.323 & 1.750 \\
5000 & 4500 & SN-IPW & 0.937 & \mcse{0.008} & 0.234 & 1.999 & 3.997 \\
\bottomrule
\end{tabular}
\begin{tablenotes}[flushleft]
\footnotesize
\item Notes: $R=1000$ replications; $\pi_t\equiv 0.5$; $K=10$ forward blocks with the first block omitted from scoring (so $n_{\mathrm{eff}}=4500$). ``WellSpec'' fits arm-specific linear regression on all covariates; ``Misspec'' fits on a restricted covariate set; ``SN-IPW'' sets $m\equiv 0$. Relative quantities are computed with respect to the oracle procedure within the same design.
\end{tablenotes}
\end{threeparttable}
\end{table}

Table~\ref{tab:simC2} shows the oracle benchmarking message: coverage remains near nominal across nuisance choices, while interval length (and $\widehat V$) inflates as the nuisance becomes more misspecified, consistent with Proposition~\ref{prop:var_decomp} and the ``validity vs precision'' discussion following Theorem~\ref{thm:studentized}.

\subsection{Design D: adaptive policy and logging integrity}\label{sec:sim_D}

Design~D is a simple contextual-bandit-style adaptive experiment that highlights the design-time role of executed propensity logging (Assumption~\ref{ass:logging_integrity}). Covariates are $X_t\in\mathbb{R}^{10}$ with correlated normal $X_t\sim\mathcal{N}(0,\Sigma)$ where $\Sigma_{ij}=\rho^{|i-j|}$ and $\rho=0.3$. The baseline outcome, heterogeneous treatment effect, and heteroskedastic noise are
\[
m_0(x)=0.8x_1+0.5x_2^2-0.5\cos(x_3)+0.25x_4,\qquad
\tau(x)=0.5x_1+0.5\sin(x_2)+0.25\ind{x_3>0}-0.25x_4x_5,
\]
\[
\sigma(x)=1+0.5|x_1|,\qquad
Y_t(a)=m_a(X_t)+\varepsilon_{ta},\ \ \varepsilon_{ta}\mid X_t\sim \mathcal{N}(0,\sigma(X_t)^2),
\]
so $\theta_0=\E[\tau(X_t)]$ is the ATE. We use burn-in $n_0=100$ with $\pi_t\equiv 0.5$ and then update in blocks of size 100: on each block we fit arm-specific linear regressions on past data only and choose the executed propensity by either an $\varepsilon$-greedy rule or a softmax rule, clipping to $[0.05,0.95]$ to enforce overlap.

We compare: (i) SN-AIPW with predictable (past-only) nuisance fits; (ii) Naive-iid-DML using leaky 5-fold cross-fitting that ignores time ordering; (iii) SN-IPW with $m\equiv 0$; (iv) SN-Oracle using the true $m_0,m_1$; and (v) SN-IPW-Assume0p5, an analysis-time mis-logging baseline that computes the score using $\pi_t\equiv 0.5$ rather than the logged executed propensity $\pi_t$.

\begin{table}[!htbp]
\centering
\small
\begin{threeparttable}
\caption{Design D (adaptive assignment, $\varepsilon$-greedy policy): coverage, length, and bias of 95\% CIs.}
\label{tab:simD_eps}
\begin{tabular}{@{}l r l r r r r@{}}
\toprule
{$n$} & {$n_{\mathrm{eff}}$} & {Method} & {Coverage} & {MCSE} & {Avg.\ length} & {Bias} \\
\midrule
250 & 150 & SN-Oracle & 0.946 & \mcse{0.007} & 1.516 & 0.009 \\
250 & 150 & SN-AIPW & 0.944 & \mcse{0.007} & 1.906 & 0.019 \\
250 & 150 & Naive-iid-DML & 0.944 & \mcse{0.007} & 1.883 & 0.019 \\
250 & 150 & SN-IPW & 0.934 & \mcse{0.008} & 2.083 & 0.003 \\
250 & 150 & SN-IPW-Assume0p5 & 0.433 & \mcse{0.016} & 1.296 & 0.703 \\
500 & 400 & SN-Oracle & 0.950 & \mcse{0.007} & 0.940 & -0.008 \\
500 & 400 & SN-AIPW & 0.939 & \mcse{0.008} & 1.174 & -0.012 \\
500 & 400 & Naive-iid-DML & 0.943 & \mcse{0.007} & 1.166 & -0.014 \\
500 & 400 & SN-IPW & 0.951 & \mcse{0.007} & 1.304 & -0.017 \\
500 & 400 & SN-IPW-Assume0p5 & 0.064 & \mcse{0.008} & 0.795 & 0.869 \\
1000 & 900 & SN-Oracle & 0.948 & \mcse{0.007} & 0.629 & 0.003 \\
1000 & 900 & SN-AIPW & 0.954 & \mcse{0.007} & 0.784 & -0.006 \\
1000 & 900 & Naive-iid-DML & 0.950 & \mcse{0.007} & 0.786 & -0.006 \\
1000 & 900 & SN-IPW & 0.947 & \mcse{0.007} & 0.864 & -0.003 \\
1000 & 900 & SN-IPW-Assume0p5 & 0.005 & \mcse{0.002} & 0.528 & 0.943 \\
2000 & 1900 & SN-Oracle & 0.953 & \mcse{0.007} & 0.433 & 0.001 \\
2000 & 1900 & SN-AIPW & 0.955 & \mcse{0.007} & 0.543 & 0.001 \\
2000 & 1900 & Naive-iid-DML & 0.948 & \mcse{0.007} & 0.548 & 0.001 \\
2000 & 1900 & SN-IPW & 0.952 & \mcse{0.007} & 0.596 & -0.003 \\
2000 & 1900 & SN-IPW-Assume0p5 & 0.001 & \mcse{0.001} & 0.363 & 0.979 \\
5000 & 4900 & SN-Oracle & 0.958 & \mcse{0.006} & 0.271 & -0.001 \\
5000 & 4900 & SN-AIPW & 0.955 & \mcse{0.007} & 0.342 & 0.000 \\
5000 & 4900 & Naive-iid-DML & 0.954 & \mcse{0.007} & 0.347 & 0.001 \\
5000 & 4900 & SN-IPW & 0.955 & \mcse{0.007} & 0.372 & -0.002 \\
5000 & 4900 & SN-IPW-Assume0p5 & 0.003 & \mcse{0.002} & 0.226 & 0.984 \\
\bottomrule
\end{tabular}
\begin{tablenotes}[flushleft]
\footnotesize
\item Notes: $R=1000$ replications; burn-in $n_0=100$ so $n_{\mathrm{eff}}=n-100$; blocks of size 100. The executed propensity is chosen by an $\varepsilon$-greedy rule with $\varepsilon=0.1$ and then clipped to $[0.05,0.95]$. Bias is $\E[\widehat\theta-\theta_0]$ (Monte Carlo mean), where $\theta_0=\E[\tau(X)]$ is approximated by Monte Carlo integration under the covariate distribution.
\end{tablenotes}
\end{threeparttable}
\end{table}

\begin{table}[!htbp]
\centering
\small
\begin{threeparttable}
\caption{Design D (adaptive assignment, softmax policy): coverage, length, and bias of 95\% CIs.}
\label{tab:simD_softmax}
\begin{tabular}{@{}l r l r r r r@{}}
\toprule
{$n$} & {$n_{\mathrm{eff}}$} & {Method} & {Coverage} & {MCSE} & {Avg.\ length} & {Bias} \\
\midrule
250 & 150 & SN-Oracle & 0.953 & \mcse{0.007} & 1.509 & -0.003 \\
250 & 150 & SN-AIPW & 0.948 & \mcse{0.007} & 1.953 & 0.012 \\
250 & 150 & Naive-iid-DML & 0.940 & \mcse{0.008} & 1.934 & 0.013 \\
250 & 150 & SN-IPW & 0.916 & \mcse{0.009} & 2.115 & 0.002 \\
250 & 150 & SN-IPW-Assume0p5 & 0.395 & \mcse{0.015} & 1.298 & 0.731 \\
500 & 400 & SN-Oracle & 0.960 & \mcse{0.006} & 0.937 & 0.004 \\
500 & 400 & SN-AIPW & 0.958 & \mcse{0.006} & 1.255 & 0.002 \\
500 & 400 & Naive-iid-DML & 0.959 & \mcse{0.006} & 1.250 & 0.006 \\
500 & 400 & SN-IPW & 0.947 & \mcse{0.007} & 1.338 & 0.004 \\
500 & 400 & SN-IPW-Assume0p5 & 0.088 & \mcse{0.009} & 0.795 & 0.840 \\
1000 & 900 & SN-Oracle & 0.952 & \mcse{0.007} & 0.637 & -0.005 \\
1000 & 900 & SN-AIPW & 0.958 & \mcse{0.006} & 0.881 & -0.011 \\
1000 & 900 & Naive-iid-DML & 0.950 & \mcse{0.007} & 0.896 & -0.008 \\
1000 & 900 & SN-IPW & 0.947 & \mcse{0.007} & 0.919 & 0.003 \\
1000 & 900 & SN-IPW-Assume0p5 & 0.009 & \mcse{0.003} & 0.530 & 0.930 \\
2000 & 1900 & SN-Oracle & 0.957 & \mcse{0.006} & 0.452 & 0.007 \\
2000 & 1900 & SN-AIPW & 0.951 & \mcse{0.007} & 0.636 & 0.001 \\
2000 & 1900 & Naive-iid-DML & 0.954 & \mcse{0.007} & 0.654 & 0.001 \\
2000 & 1900 & SN-IPW & 0.935 & \mcse{0.008} & 0.651 & 0.006 \\
2000 & 1900 & SN-IPW-Assume0p5 & 0.004 & \mcse{0.002} & 0.365 & 0.982 \\
5000 & 4900 & SN-Oracle & 0.954 & \mcse{0.007} & 0.290 & 0.001 \\
5000 & 4900 & SN-AIPW & 0.954 & \mcse{0.007} & 0.421 & 0.005 \\
5000 & 4900 & Naive-iid-DML & 0.952 & \mcse{0.007} & 0.435 & 0.005 \\
5000 & 4900 & SN-IPW & 0.944 & \mcse{0.007} & 0.417 & 0.009 \\
5000 & 4900 & SN-IPW-Assume0p5 & 0.001 & \mcse{0.001} & 0.227 & 1.015 \\
\bottomrule
\end{tabular}
\begin{tablenotes}[flushleft]
\footnotesize
\item Notes: Same design as Table~\ref{tab:simD_eps}, but the executed propensity uses a softmax rule $\pi_t=\expit(\widehat\tau_{t-1}(X_t)/T)$ with temperature $T=0.5$ and clipping to $[0.05,0.95]$.
\end{tablenotes}
\end{threeparttable}
\end{table}

Tables~\ref{tab:simD_eps}--\ref{tab:simD_softmax} show two complementary messages: with the logged executed propensities and predictable nuisance fitting, the studentized intervals are close to nominal under both adaptive policies; in contrast, the mis-logged ``Assume0p5'' baseline is severely biased and has catastrophic undercoverage, illustrating why Assumption~\ref{ass:logging_integrity} is design-critical.

\subsection{Variance-ratio variability}\label{sec:sim_fig}
\begin{figure}[t]
\centering
\IfFileExists{figures/var_ratio_hist.pdf}{\includegraphics[width=0.75\textwidth]{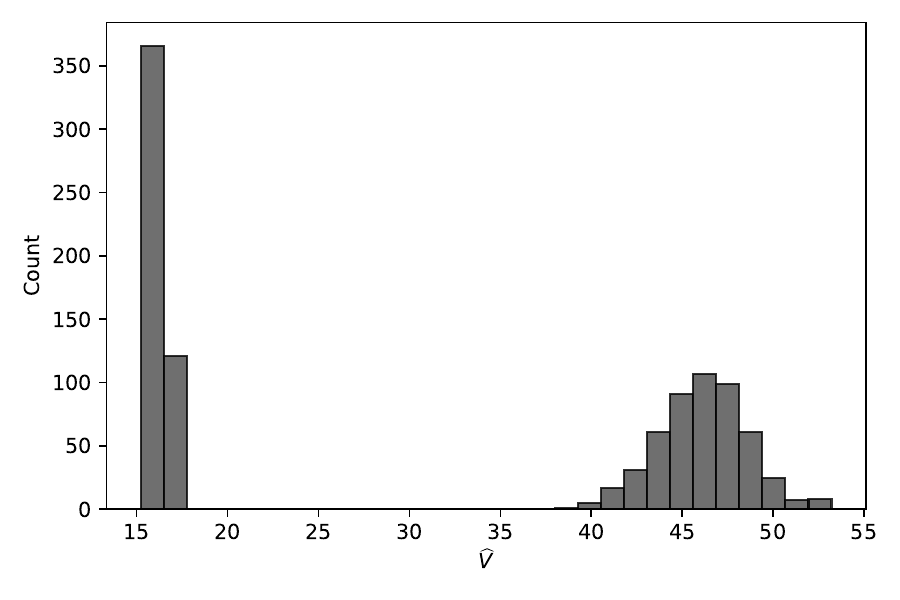}}{\fbox{Missing figure: \texttt{\detokenize{figures/var_ratio_hist.pdf}}}}
\caption{Design A: distribution across replications of the regime-dependent long-run variance proxy. In this design, $V_{\mathcal{T}}^2/n_{\mathrm{eff}}$ converges to $16.25$ or $46.25$ depending on the burn-in sign, so there is no deterministic variance limit (Assumption~\ref{ass:stab} fails). This motivates studentization by realized quadratic variation in the SN CI \eqref{eq:ci}.}
\label{fig:var_ratio_hist}
\end{figure}

Figure~\ref{fig:var_ratio_hist} shows two distinct variance modes for $V_{\mathcal{T}}^2/n_{\mathrm{eff}}$ induced by the post--burn-in propensity regime in Design~A. This bimodality explains the regime-dependent length and coverage patterns in Table~\ref{tab:simA_conditional}: intervals that plug in a single deterministic variance target are too short in the high-variance regime and too long in the low-variance regime, even if their marginal coverage can be closer to nominal by averaging over regimes. The SN interval uses the realized quadratic variation (equivalently $\widehat V$) within each replication and therefore adapts to the realized variance regime by construction.

\section{Conclusion}\label{sec:conclusion}

Modern adaptive experiments often retain a classical reporting requirement: a single end-of-study confidence interval for the superpopulation ATE $\theta_0$ at a prespecified horizon. The difficulty is not estimation per se, but calibration. When assignment probabilities evolve, the predictable quadratic variation of AIPW/DML score increments can remain replication-random, so a Wald statistic normalized by a deterministic variance target can be well behaved marginally yet miscalibrated conditional on the realized propensity regime. Our analysis keeps the usual logged-propensity AIPW/DML estimator fixed and instead enforces a martingale representation through an auditable scoring contract. The contract is minimal but concrete: the experiment log must record, unit by unit, the executed propensity actually passed to the randomization device (Assumption~\ref{ass:logging_integrity}), and the nuisance regressions used to score unit~$t$ must be measurable with respect to the past history (Assumption~\ref{ass:predictable_nuis}). Forward cross-fitting provides an implementation pattern that satisfies this predictability requirement while allowing flexible learning.

Under the standard causal model, i.i.d.\ arrivals, and executed overlap on a prespecified scored set, these pipeline conditions imply that centered AIPW/DML score increments form an exact martingale difference sequence (Lemma~\ref{lem:mds}). This step replaces the fold-wise independence logic common in i.i.d.\ DML with exact conditional unbiasedness given the past. With the martingale structure in place, inference follows by studentization along the realized experiment path. Theorem~\ref{thm:studentized} applies self-normalized martingale limit theory to show that the Studentized score sum, normalized by realized quadratic variation, converges to $\mathcal{N}(0,1)$ at the prespecified horizon even when the variance does not stabilize to any deterministic limit. Proposition~\ref{prop:feasibility} justifies the practical studentizer: the sample-variance plug-in used in standard reporting is asymptotically equivalent to realized quadratic variation under the stated conditions.

Nuisance learning enters only through precision. Proposition~\ref{prop:var_decomp} decomposes the conditional second moment, yielding an oracle benchmark and isolating a nonnegative augmentation term attributable to regression error; learning better outcome regressions reduces this term and shrinks the realized quadratic variation. Consistency is not needed for fixed-horizon validity, but under $L^2$ convergence the feasible procedure is asymptotically equivalent to the oracle AIPW/DML score (Theorem~\ref{thm:oracle_equivalence}).

The simulation designs in Section~\ref{sec:simulation} make the operational messages tangible: deterministic-variance normalizations can under- and over-cover across realized variance regimes, predictable scoring avoids leakage-induced failures, and analysis based on mis-logged propensities can be severely biased. These examples motivate the logging-and-fitting protocol and formalized as a data-contract checklist in Appendix~\ref{app:repro}. In practice, this means persisting the executed propensity and randomization metadata for each unit, enforcing overlap at assignment time on the scored set, recording the exact training indices and randomness used for each nuisance fit, and producing intervals only at the prespecified horizon so that ``peeking'' and optional stopping are excluded by design.

The contract is auditable, but it is not a substitute for causal assumptions. Logged propensities and predictable fitting can be checked from stored artifacts, whereas no interference, sequential randomization, and the no-selection/i.i.d.\ arrival condition (or its mean-stationarity alternative) are substantive and not fully testable from logs (Remark~\ref{rem:weaker_iid}). Likewise, our conclusions are strictly fixed-horizon: the intervals are not anytime-valid and should not be used under continuous monitoring or data-dependent stopping. When time-uniform guarantees are required, confidence sequences and related martingale methods are the appropriate alternatives.

Several extensions are immediate within the present framework. One is to develop time-uniform analogues that leverage the same martingale score construction while changing only the inferential target. A second is to formalize predictable scored-set rules (Remark~\ref{rem:T_predictable}) and outcome-delay settings within the same audit contract. Further work could relax the fourth-moment requirement to the $(2+\delta)$-moment regime highlighted in Remark~\ref{rem:moment_conditions}, and broaden the arrival model beyond i.i.d.\ using the mean-stationarity condition noted in Remark~\ref{rem:weaker_iid}.

\appendix
\normalsize

\newpage

\section{Martingale Limit Theory}\label{app:martingale}\label{app:selfnorm_clt}
We use a self-normalized martingale CLT for (possibly triangular-array) martingale differences. Background and related results on martingale CLTs and self-normalization can be found in \citet{hallHeyde1980}, the monograph \citet{delapena2009self}, and the survey \citet{shao2013survey}; see also modern bounds and concentration results for self-normalized martingales in \citet{fanShao2017,fanGramaLiuShao2019,bercuTouati2019}. The statement below is a direct corollary of \citet[Theorem~3.10]{hallHeyde1980} combined with Slutsky; we record it in the form used in our proof of Theorem~\ref{thm:studentized}.

\begin{theorem}[Self-normalized martingale CLT (Lyapunov form)]\label{thm:selfnorm}
Let $(\xi_{n,t},\cF_{n,t})$ be a martingale difference array with partial sums $S_n=\sum_{t=1}^{k_n}\xi_{n,t}$. Define the predictable and realized quadratic variations
\[
V_n^2:=\sum_{t=1}^{k_n}\E[\xi_{n,t}^2\mid \cF_{n,t-1}]
\qquad\text{and}\qquad
Q_n:=\sum_{t=1}^{k_n}\xi_{n,t}^2.
\]
Assume: (i) $V_n^2\to_p \infty$; (ii) $\Pp(Q_n>0)\to 1$ and $Q_n/V_n^2\to_p 1$; and (iii) for some $\delta>0$,
\[
\frac{1}{V_n^{2+\delta}}\sum_{t=1}^{k_n}\E[|\xi_{n,t}|^{2+\delta}]\to 0.
\]
Then $S_n/\sqrt{Q_n}\Rightarrow \mathcal{N}(0,1)$.
\end{theorem}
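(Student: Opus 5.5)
The plan is to prove the random-norming statement $S_n/V_n\Rightarrow\mathcal{N}(0,1)$ first, and then pass to the realized normalizer. The second step is immediate: on the event $\{Q_n>0\}$, whose probability tends to one by (ii),
\[
\frac{S_n}{\sqrt{Q_n}}=\frac{S_n}{V_n}\cdot\Big(\frac{V_n^2}{Q_n}\Big)^{1/2},
\]
and $V_n^2/Q_n\to_p 1$ by (ii), so Slutsky's theorem transfers the limit. This matches the remark before the statement that the result is \citet[Theorem~3.10]{hallHeyde1980} combined with Slutsky. The real content is therefore the first step, where the normalizer $V_n$ is random and need not be asymptotically deterministic.

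\textbf{Step 1: predictable time change.} Since $V_n^2$ is not $\cF_{n,0}$-measurable, the classical conditional-variance CLT cannot be applied directly with norming $V_n$. I would first try a time change driven by the predictable quadratic variation. Partial sums $V_{n,k}^2:=\sum_{t\le k}\E[\xi_{n,t}^2\mid\cF_{n,t-1}]$ are predictable, so for each level $u\in(0,1]$ the index
\[
\nu_n(u):=\max\{k: V_{n,k}^2\le u\,V_n^2\}
\]
is the natural clock. The drawback is that $\nu_n(u)$ depends on the terminal $V_n^2$, which is not predictable. For that reason I would instead verify the hypotheses of \citet[Theorem~3.10]{hallHeyde1980} directly: a conditional Lindeberg condition together with convergence in probability of the normalized conditional variance. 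That theorem is designed for a self-normalizing sequence of this kind.

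\textbf{Step 2: Lindeberg from Lyapunov.} For any $\eta>0$,
\[
\frac{1}{V_n^2}\sum_t \E[\xi_{n,t}^2\ind{|\xi_{n,t}|>\eta V_n}\mid\cF_{n,t-1}]
\le \frac{1}{\eta^\delta V_n^{2+\delta}}\sum_t\E[|\xi_{n,t}|^{2+\delta}\mid\cF_{n,t-1}].
\]
Condition (iii) controls the unconditional $(2+\delta)$-moments, so I would convert it into control of the conditional sum. The route is to localize on $\{V_n^2\ge c_n\}$ for a slowly growing deterministic $c_n$, whose probability tends to one by (i). On that event Markov's inequality, applied to the conditional sum, gives convergence to zero in probability. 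The conditional-variance condition of Theorem~3.10 holds trivially in normalized form, since $V_n^2/V_n^2=1$. The only genuine requirement is that the theorem permits the random normalizer; this is where the variance-ratio hypothesis $Q_n/V_n^2\to_p 1$ enters, serving as the stated substitute for the nesting or stability condition.

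\textbf{Main obstacle.} The delicate point is the random-norming step: making sure no nested-filtration or stable-convergence hypothesis is silently needed when $V_n^2/b_n$ converges to a nondegenerate random limit, or has no limit at all. If a direct appeal to Theorem~3.10 proves insufficient, my fallback is a characteristic-function argument applied to the time-changed martingale stopped at the predictable time $\tau_n:=\min\{k: V_{n,k}^2\ge c_n\}$, using the exponential martingale
\[
\exp\big(iuS_{n,k}/\sqrt{c_n}+u^2V_{n,k}^2/(2c_n)\big).
\]
The ratio condition (ii) and the Lyapunov bound (iii) would then be used to show that the increments after $\tau_n$ are asymptotically negligible relative to the stopped sum.
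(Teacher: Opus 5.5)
Your final step (Slutsky, passing from $V_n$ to $\sqrt{Q_n}$ via (ii)) is fine, and your route is the one the paper indicates: it justifies the statement only by citing Hall and Heyde (1980, Theorem~3.10) plus Slutsky. But the point you flag as the ``main obstacle'' is a genuine gap, and it cannot be closed, because the statement is false as written.

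\textbf{What the random-norming CLT actually needs.} The martingale CLTs with random limiting variance in Hall and Heyde (Chapter~3), from which a self-normalized form follows, normalize by deterministic constants $b_n$. They require $V_n^2/b_n^2\to_p\eta^2$ for an a.s.\ positive random variable $\eta^2$, and, when $\eta^2$ is random, nested $\sigma$-fields $\cF_{n,t}\subseteq\cF_{n+1,t}$. Hypotheses (i)--(iii) impose no such convergence.

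\textbf{Where your argument breaks.} Your remark that the conditional-variance condition holds ``trivially'' because $V_n^2/V_n^2=1$ is exactly where this is lost. Since $V_n$ depends on the whole path, $\xi_{n,t}/V_n$ is not a martingale difference array, and no martingale CLT applies to it. Hypothesis (ii) cannot stand in for nesting or stability either, since it only compares $Q_n$ with $V_n^2$. Your stopping-time fallback fails for the same reason. At best it gives $S_{n,\tau_n}/\sqrt{c_n}\Rightarrow\mathcal{N}(0,1)$, and the increments after $\tau_n$ are negligible only if $V_n^2/c_n\to_p 1$ for deterministic $c_n$. That is precisely the stabilization the statement is meant to avoid.

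\textbf{A counterexample satisfying every hypothesis.} Let $r_1,r_2,\dots$ be i.i.d.\ Rademacher, $\cF_{n,t}:=\sigma(r_1,\dots,r_t)$, $k_n:=2n$, and fix $0<b<a$. Put $\xi_{n,t}:=r_t$ for $t\le n$ and $\xi_{n,t}:=s_n r_t$ for $n<t\le 2n$, where $s_n:=a$ if $\sum_{u\le n}r_u>0$ and $s_n:=b$ otherwise. Then:
\begin{itemize}
\item $Q_n=V_n^2=n(1+s_n^2)\ge n$, so (i) and (ii) hold.
\item Condition (iii) holds with $\delta=1$, because $\sum_t\E|\xi_{n,t}|^3\le n(1+a^3)$ while $V_n^3\ge n^{3/2}$.
\item The $\sigma$-fields are even nested.
\end{itemize}
Yet the continuous mapping theorem gives $S_n/\sqrt{Q_n}\Rightarrow W:=(Z_1+s(Z_1)Z_2)/\sqrt{1+s(Z_1)^2}$, where $Z_1,Z_2$ are i.i.d.\ $\mathcal{N}(0,1)$ and $s(z):=a\ind{z>0}+b\ind{z\le 0}$. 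A polar-coordinate computation gives $\Pp(W\le 0)=\tfrac12+(\arctan a-\arctan b)/(2\pi)\neq\tfrac12$, so $W$ is not standard normal. The culprit is that $V_n^2/n$ converges only in distribution, not in probability.

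So neither your argument nor the paper's citation can establish the statement. You must add the missing hypothesis: constants $b_n$ with $V_n^2/b_n^2\to_p\eta^2>0$ a.s., plus nesting if $\eta^2$ is random. With it, (iii) yields the conditional Lindeberg condition for $\xi_{n,t}/b_n$, and your final Slutsky step completes the proof.
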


\section{Proofs}\label{app:proofs}

\subsection{Proof of Theorem~\ref{thm:studentized}}\label{app:proof_studentized}
\begin{proof}[Proof of Theorem~\ref{thm:studentized}]
The proof proceeds in two steps.

\medskip\noindent\emph{Indexing convention.}
To match Theorem~\ref{thm:selfnorm} (stated for sums over $t=1,\dots,n$), define the extended increments
\[
\tilde\xi_t := T_t(\widehat\phi_t-\theta_0), \qquad T_t:=\ind{t\in\mathcal{T}},
\qquad t=1,\dots,n.
\]
Since $\mathcal{T}$ is deterministic, $T_t$ is nonrandom and
$\E[\tilde\xi_t\mid\mathcal{F}_{t-1}]=T_t\E[\widehat\phi_t-\theta_0\mid\mathcal{F}_{t-1}]=0$,
so $\{\tilde\xi_t,\mathcal{F}_t\}$ is a martingale difference array.
$\sum_{t=1}^n \tilde\xi_t = S_{\mathcal{T}}$, $\sum_{t=1}^n \tilde\xi_t^{\,2} = Q_{\mathcal{T}}$,
and $\sum_{t=1}^n \E[\tilde\xi_t^{\,2}\mid \mathcal{F}_{t-1}] = V_{\mathcal{T}}^2$.
For readability we drop the tilde notation below.

\medskip\noindent\textbf{Step 1: Self-normalized CLT for $S_{\mathcal{T}}/\sqrt{Q_{\mathcal{T}}}$.}
Let $S_{\mathcal{T}} = \sum_{t\in\mathcal{T}} \xi_t = n_{\mathrm{eff}}(\widehat\theta - \theta_0)$. We apply Theorem~\ref{thm:selfnorm} (stated in Appendix~\ref{app:martingale}). Assumption~\ref{ass:var_growth} implies $V_{\mathcal{T}}^2 \to \infty$ in probability.

To show $Q_{\mathcal{T}}/V_{\mathcal{T}}^2\pto 1$, define the martingale differences
$\Delta_t:=\xi_t^2-\E[\xi_t^2\mid \mathcal{F}_{t-1}]$ and the partial sums
$M_k:=\sum_{t=1}^k\Delta_t$ for $k=1,\dots,n$. Then $\{M_k,\mathcal{F}_k\}$ is a martingale and
$M_n=Q_{\mathcal{T}}-V_{\mathcal{T}}^2$.
By Lemma~\ref{lem:moment_bounds}, $\sup_t\E[\xi_t^4]<\infty$, and Jensen implies
$\E[\E[\xi_t^2\mid\mathcal{F}_{t-1}]^2]\le \E[\xi_t^4]$, hence
$\sup_t\E[\Delta_t^2]\le 4\sup_t\E[\xi_t^4]<\infty$.
By orthogonality of martingale differences,
\[
\E[M_n^2]=\sum_{t\in\mathcal{T}}\E[\Delta_t^2]=O(n_{\mathrm{eff}}).
\]
Let $\mathcal{E}_{\mathcal{T}}:=\{V_{\mathcal{T}}^2\ge v_- n_{\mathrm{eff}}\}$,
which satisfies $\Pp(\mathcal{E}_{\mathcal{T}})\to 1$ by Assumption~\ref{ass:var_growth}.
On $\mathcal{E}_{\mathcal{T}}$,
\[
\left|\frac{Q_{\mathcal{T}}}{V_{\mathcal{T}}^2}-1\right|
\le \frac{|M_n|}{v_- n_{\mathrm{eff}}}.
\]
Therefore, for any $\eta>0$,
\[
\Pp\!\left(\left|\frac{Q_{\mathcal{T}}}{V_{\mathcal{T}}^2}-1\right|>\eta\right)
\le \Pp(\mathcal{E}_{\mathcal{T}}^c)
+\frac{\E[M_n^2]}{\eta^2 v_-^2 n_{\mathrm{eff}}^{2}}
\to 0,
\]
so $Q_{\mathcal{T}}/V_{\mathcal{T}}^2\pto 1$.

For the Lyapunov condition with $\delta=2$, Lemma~\ref{lem:moment_bounds} gives $\sup_{t\in\mathcal{T}}\E[|\widehat\phi_t|^4]<\infty$; since $\xi_t=T_t(\widehat\phi_t-\theta_0)$ and $T_t\in\{0,1\}$, it follows that $\sup_{t\le n}\E[|\xi_t|^4]<\infty$ for each $n$. Jensen then implies $\sup_{t\le n}\E[|\xi_t|^{2+\delta}]<\infty$ for any $\delta\in(0,2]$.
\[
\frac{\sum_{t\in\mathcal{T}}\E|\xi_t|^4}{V_{\mathcal{T}}^4}
\le \frac{C n_{\mathrm{eff}}}{(v_- n_{\mathrm{eff}})^2}\to 0,
\]
hence condition (iii) of Theorem~\ref{thm:selfnorm} holds.
By Theorem~\ref{thm:selfnorm}, $S_{\mathcal{T}}/\sqrt{Q_{\mathcal{T}}} \dto \mathcal{N}(0,1)$.

\medskip\noindent\textbf{Step 2: Replace $Q_{\mathcal{T}}$ with $(n_{\mathrm{eff}}-1)\widehat{V}$.}
We have
\begin{align*}
(n_{\mathrm{eff}}-1)\widehat V
&= \sum_{t\in\mathcal{T}} (\widehat{\phi}_t - \widehat\theta)^2\\
&= \sum_{t\in\mathcal{T}} (\xi_t - (\widehat\theta - \theta_0))^2\\
&= Q_{\mathcal{T}} - n_{\mathrm{eff}}(\widehat\theta - \theta_0)^2.
\end{align*}
Since $S_{\mathcal{T}}/\sqrt{Q_{\mathcal{T}}} = O_\Pp(1)$, we have $S_{\mathcal{T}}^2/Q_{\mathcal{T}} = O_\Pp(1)$, hence
\begin{align*}
\frac{n_{\mathrm{eff}}(\widehat\theta - \theta_0)^2}{Q_{\mathcal{T}}}
&= \frac{S_{\mathcal{T}}^2}{n_{\mathrm{eff}} Q_{\mathcal{T}}}\\
&= \frac{1}{n_{\mathrm{eff}}} \cdot \frac{S_{\mathcal{T}}^2}{Q_{\mathcal{T}}}
= o_\Pp(1).
\end{align*}
Thus $(n_{\mathrm{eff}}-1)\widehat V/Q_{\mathcal{T}} \pto 1$, and by Slutsky's lemma:
\[
\frac{S_{\mathcal{T}}}{\sqrt{(n_{\mathrm{eff}}-1)\widehat V}} = \frac{S_{\mathcal{T}}}{\sqrt{Q_{\mathcal{T}}}} \cdot \sqrt{\frac{Q_{\mathcal{T}}}{(n_{\mathrm{eff}}-1)\widehat V}} \dto \mathcal{N}(0,1). \qedhere
\]
\end{proof}

\section{Additional results}\label{app:additional_results}

This appendix collects results on variance stabilization, oracle equivalence,
and variance-growth primitives.

\subsection{Variance Consistency Under Stabilization}\label{sec:variance_stab}

When the conditional variance stabilizes, we obtain a conventional CLT.

\begin{assumption}[Variance stabilization]\label{ass:stab}
There exists $V \in (0,\infty)$ such that $V_{\mathcal{T}}^2/n_{\mathrm{eff}} \pto V$.
\end{assumption}

\begin{corollary}[Non-studentized CLT under variance stabilization]\label{cor:nonstudentized}
Assume Assumptions~\ref{ass:iid}, \ref{ass:logging_integrity}, \ref{ass:unconf}, \ref{ass:overlap}, \ref{ass:predictable_nuis}, \ref{ass:T_det}, \ref{ass:moments}, \ref{ass:nuis_stab}, \ref{ass:var_growth}, and \ref{ass:stab}. Then $\sqrt{n_{\mathrm{eff}}}(\widehat{\theta}-\theta_0)\Rightarrow \mathcal{N}(0,V)$ and $\widehat V\to_p V$.
\end{corollary}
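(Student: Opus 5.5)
The plan is to derive both claims of Corollary~\ref{cor:nonstudentized} from what Theorem~\ref{thm:studentized} already establishes, adding only the deterministic limit supplied by Assumption~\ref{ass:stab}.

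\textbf{Step 1: $\widehat V\to_p V$.} Write $\widehat V = \frac{n_{\mathrm{eff}}}{n_{\mathrm{eff}}-1}\cdot \frac{Q_{\mathcal{T}}}{n_{\mathrm{eff}}}\cdot \frac{(n_{\mathrm{eff}}-1)\widehat V}{Q_{\mathcal{T}}}$.
\begin{itemize}
\item The last factor tends to $1$ in probability by Proposition~\ref{prop:feasibility}, whose hypothesis $S_{\mathcal{T}}/\sqrt{Q_{\mathcal{T}}}=O_p(1)$ follows from \eqref{eq:studentized_clt_q}.
\item For the middle factor, the proof of Theorem~\ref{thm:studentized} in Appendix~\ref{app:proofs} shows $Q_{\mathcal{T}}/V_{\mathcal{T}}^2\to_p 1$. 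This used only the fourth-moment bound of Lemma~\ref{lem:moment_bounds} and Assumption~\ref{ass:var_growth}. Combined with $V_{\mathcal{T}}^2/n_{\mathrm{eff}}\to_p V$ from Assumption~\ref{ass:stab}, the continuous mapping theorem gives $Q_{\mathcal{T}}/n_{\mathrm{eff}}\to_p V$.
\item The first factor tends to $1$ deterministically, since $n_{\mathrm{eff}}\to\infty$.
\end{itemize}
Multiplying the three limits gives $\widehat V\to_p V$.

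\textbf{Step 2: the non-studentized CLT.} Decompose
\[
\sqrt{n_{\mathrm{eff}}}(\widehat\theta-\theta_0)=\frac{S_{\mathcal{T}}}{\sqrt{n_{\mathrm{eff}}}}=\frac{S_{\mathcal{T}}}{\sqrt{Q_{\mathcal{T}}}}\cdot\sqrt{\frac{Q_{\mathcal{T}}}{n_{\mathrm{eff}}}}.
\]
The first factor converges weakly to $\mathcal{N}(0,1)$ by \eqref{eq:studentized_clt_q}. The second converges in probability to the constant $\sqrt V$ by Step 1. Slutsky's theorem then yields $\sqrt{n_{\mathrm{eff}}}(\widehat\theta-\theta_0)\Rightarrow \sqrt V\,\mathcal{N}(0,1)=\mathcal{N}(0,V)$.

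\textbf{Where the obstacle lies.} The argument has essentially no difficulty beyond Theorem~\ref{thm:studentized}. The one point needing care is the role of determinism of $V$ in the Slutsky step. The product of a weakly convergent sequence with a sequence converging in probability is only guaranteed to behave well when the limit of the latter is a constant. Assumption~\ref{ass:stab} ensures exactly this, and it is the only place the assumption enters.

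Without stabilization, $Q_{\mathcal{T}}/n_{\mathrm{eff}}$ could converge to a random limit that depends on $S_{\mathcal{T}}/\sqrt{Q_{\mathcal{T}}}$. This is why only the studentized statement holds in general, as Design~A illustrates. I would flag this point explicitly rather than attempt a direct Lindeberg argument, which would be unnecessary.
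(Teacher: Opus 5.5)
Your proposal is correct and follows essentially the same route as the paper: combine $Q_{\mathcal{T}}/V_{\mathcal{T}}^2\pto 1$ and $(n_{\mathrm{eff}}-1)\widehat V/Q_{\mathcal{T}}\pto 1$ from the proof of Theorem~\ref{thm:studentized} with Assumption~\ref{ass:stab} to get $\widehat V\pto V$, then apply Slutsky to the self-normalized limit. The only difference is cosmetic: you rescale $S_{\mathcal{T}}/\sqrt{Q_{\mathcal{T}}}$ by $\sqrt{Q_{\mathcal{T}}/n_{\mathrm{eff}}}$, whereas the paper rescales the feasible Studentized statistic by $\sqrt{\widehat V}$.
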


\begin{proof}
By Theorem~\ref{thm:studentized},
\[
\frac{\sqrt{n_{\mathrm{eff}}}(\widehat\theta-\theta_0)}{\sqrt{\widehat V}} \dto \mathcal{N}(0,1).
\]
Moreover the proof of Theorem~\ref{thm:studentized} shows both
$Q_{\mathcal{T}}/V_{\mathcal{T}}^2\pto 1$ and
$(n_{\mathrm{eff}}-1)\widehat V/Q_{\mathcal{T}}\pto 1$, hence
$\widehat V = (V_{\mathcal{T}}^2/n_{\mathrm{eff}})\cdot(1+o_{\Pp}(1))$.
Under Assumption~\ref{ass:stab}, $V_{\mathcal{T}}^2/n_{\mathrm{eff}}\pto V$,
so $\widehat V\pto V$ and Slutsky yields
$\sqrt{n_{\mathrm{eff}}}(\widehat\theta-\theta_0)\dto \mathcal{N}(0,V)$.
\end{proof}

\subsection{Oracle Equivalence}\label{sec:oracle_equiv}

Under additional nuisance consistency, the feasible estimator is asymptotically equivalent to the oracle.

\begin{assumption}[$L^2$ nuisance consistency]\label{ass:l2}
Let $m_a^\star(x):=\E[Y(a)\mid X=x]$. Then, for each $a\in\{0,1\}$, we have (where the expectation is over the full data stream and any learner randomness used to construct $\widehat m$)
\[
\frac{1}{n_{\mathrm{eff}}}\sum_{t\in\mathcal{T}}
\E\Big[
\big(\widehat m_{t-1,a}(X_t)-m_a^\star(X_t)\big)^2
\Big]\to 0.
\]
\end{assumption}

\begin{theorem}[Oracle equivalence]\label{thm:oracle_equivalence}
Assume Assumptions~\ref{ass:iid}, \ref{ass:logging_integrity}, \ref{ass:unconf}, \ref{ass:overlap}, \ref{ass:predictable_nuis}, and \ref{ass:l2}. Assume also the scored set $\mathcal{T}$ is deterministic (Assumption~\ref{ass:T_det}). Let $m^\star:=(m_0^\star,m_1^\star)$ and define the oracle estimator
\[
\widehat{\theta}^\star := \frac{1}{n_{\mathrm{eff}}}\sum_{t\in\mathcal{T}} \phi_t(m^\star).
\]
Then
\[
\sqrt{n_{\mathrm{eff}}}(\widehat{\theta} - \widehat{\theta}^\star) \pto 0.
\]
\end{theorem}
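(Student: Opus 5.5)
The plan is to write the difference of the two estimators as a normalized martingale sum and show that its second moment vanishes. Set
\[
D_t:=\phi_t(\widehat m_{t-1})-\phi_t(m^\star),\qquad t\in\mathcal{T},
\]
so that $\sqrt{n_{\mathrm{eff}}}(\widehat\theta-\widehat\theta^\star)=n_{\mathrm{eff}}^{-1/2}\sum_{t\in\mathcal{T}}D_t$. Writing $b_{t,a}:=\widehat m_{t-1,a}-m_a^\star$ and expanding \eqref{eq:phi_def}, the outcome terms cancel, and a direct computation (the same one as in the proof of Proposition~\ref{prop:var_decomp}) gives
\[
D_t=-\frac{A_t-\pi_t}{\pi_t}\,b_{t,1}(X_t)-\frac{A_t-\pi_t}{1-\pi_t}\,b_{t,0}(X_t).
\]
This shows that $D_t$ depends on the data only through $A_t-\pi_t$ and $\cG_t$-measurable quantities.

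\textbf{Martingale structure.} The first step is to check that $\{D_t,\cF_t\}$ is a martingale difference sequence over $\mathcal{T}$. By Assumption~\ref{ass:predictable_nuis}, $b_{t,a}$ is $\cF_{t-1}$-measurable, so $b_{t,a}(X_t)$ is $\cG_t$-measurable. By Assumption~\ref{ass:logging_integrity}, $\E[A_t-\pi_t\mid\cG_t]=0$, hence $\E[D_t\mid\cG_t]=0$ and, by the tower property, $\E[D_t\mid\cF_{t-1}]=0$. Equivalently, one can apply Proposition~\ref{prop:identification} to both $\widehat m_{t-1}$ and $m^\star$, since each has conditional mean $\tau(X_t)$. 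Integrability follows from the bound on $\E[D_t^2]$ below, provided the right-hand side of Assumption~\ref{ass:l2} is finite for each $t$; I will assume this, since it is implicit in the assumption being meaningful. Because $\mathcal{T}$ is deterministic (Assumption~\ref{ass:T_det}), the cross terms vanish by orthogonality of martingale differences, and
\[
\E\Big[\Big(n_{\mathrm{eff}}^{-1/2}\sum_{t\in\mathcal{T}}D_t\Big)^2\Big]=\frac{1}{n_{\mathrm{eff}}}\sum_{t\in\mathcal{T}}\E[D_t^2].
\]

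\textbf{Bounding the conditional variance.} Condition on $\cG_t$ and use $\E[(A_t-\pi_t)^2\mid\cG_t]=\pi_t(1-\pi_t)$ together with overlap (Assumption~\ref{ass:overlap}). This gives
\[
\E[D_t^2\mid\cG_t]=\pi_t(1-\pi_t)\Big(\frac{b_{t,1}(X_t)}{\pi_t}+\frac{b_{t,0}(X_t)}{1-\pi_t}\Big)^2\le \frac{2}{\varepsilon^2}\big(b_{t,1}(X_t)^2+b_{t,0}(X_t)^2\big).
\]
This is exactly the augmentation term of Proposition~\ref{prop:var_decomp}. Taking expectations and averaging over $t\in\mathcal{T}$, Assumption~\ref{ass:l2} sends the averaged second moment to zero. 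Chebyshev's inequality then yields $\sqrt{n_{\mathrm{eff}}}(\widehat\theta-\widehat\theta^\star)\pto 0$.

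\textbf{Main obstacle.} The only delicate point is justifying the orthogonality step without the fourth-moment assumptions, which the theorem does not list. Only second moments of $b_{t,a}(X_t)$ are needed, and these are controlled through the bound above. The argument never requires a rate on the nuisances, only the averaged $L^2$ convergence in Assumption~\ref{ass:l2}, because predictability removes all bias and leaves a pure martingale term.
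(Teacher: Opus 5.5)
Your proposal is correct and follows the paper's proof essentially step for step: the same identity $\phi_t(\widehat m_{t-1})-\phi_t(m^\star)=-(A_t-\pi_t)\bigl(b_{t,1}(X_t)/\pi_t+b_{t,0}(X_t)/(1-\pi_t)\bigr)$, the martingale-difference property from $\E[A_t-\pi_t\mid\cG_t]=0$, orthogonality to reduce to $n_{\mathrm{eff}}^{-1}\sum_{t\in\mathcal{T}}\E[D_t^2]$, the overlap bound on $\pi_t(1-\pi_t)(\cdot)^2$, and Assumption~\ref{ass:l2} with Chebyshev. Your integrability remark is a small point the paper leaves implicit: the finite second moments of $D_t$ come from the terms in Assumption~\ref{ass:l2}, so no fourth moments are needed.
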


\begin{proof}
Let $e_{t-1,a}(x):=\widehat m_{t-1,a}(x)-m_a^\star(x)$.
A direct algebraic calculation gives, for each scored index $t\in\mathcal{T}$,
\[
\phi_t(\widehat m_{t-1})-\phi_t(m^\star)
=
-(A_t-\pi_t)\left(\frac{e_{t-1,1}(X_t)}{\pi_t}+\frac{e_{t-1,0}(X_t)}{1-\pi_t}\right)
=:\Delta_t.
\]
By predictability, $(e_{t-1,0},e_{t-1,1})$ is $\cF_{t-1}$-measurable, and $\pi_t$ is $\cG_t$-measurable.
Using iterated expectations,
\[
\E[\Delta_t\mid \cF_{t-1}]
=
\E\!\left[\E[\Delta_t\mid \cG_t]\mid \cF_{t-1}\right]
=
-\E\!\left[\left(\frac{e_{t-1,1}(X_t)}{\pi_t}+\frac{e_{t-1,0}(X_t)}{1-\pi_t}\right)
\E[A_t-\pi_t\mid \cG_t]\Bigm|\cF_{t-1}\right]
=0,
\]
so $\{\Delta_t,\cF_t\}$ is a martingale difference sequence over $t\in\mathcal{T}$.
Now
\[
\sqrt{n_{\mathrm{eff}}}\,(\widehat\theta-\widehat\theta^\star)
=
\frac{1}{\sqrt{n_{\mathrm{eff}}}}\sum_{t\in\mathcal{T}} \Delta_t.
\]
By orthogonality of martingale differences,
\[
\E\!\left[\Big(\sqrt{n_{\mathrm{eff}}}\,(\widehat\theta-\widehat\theta^\star)\Big)^2\right]
=
\frac{1}{n_{\mathrm{eff}}}\sum_{t\in\mathcal{T}}\E[\Delta_t^2].
\]
Conditional on $\cG_t$, $\E[(A_t-\pi_t)^2\mid \cG_t]=\pi_t(1-\pi_t)$, hence
\[
\E[\Delta_t^2\mid \cG_t]
=
\pi_t(1-\pi_t)\left(\frac{e_{t-1,1}(X_t)}{\pi_t}+\frac{e_{t-1,0}(X_t)}{1-\pi_t}\right)^2
\le C\big(e_{t-1,1}(X_t)^2+e_{t-1,0}(X_t)^2\big)
\]
for a constant $C<\infty$ depending only on the overlap constant $\varepsilon$.
Taking expectations and averaging over $t\in\mathcal{T}$ yields
\[
\E\!\left[\Big(\sqrt{n_{\mathrm{eff}}}\,(\widehat\theta-\widehat\theta^\star)\Big)^2\right]
\le
\frac{C}{n_{\mathrm{eff}}}\sum_{t\in\mathcal{T}}
\E\big[e_{t-1,1}(X_t)^2+e_{t-1,0}(X_t)^2\big]
\to 0
\]
by Assumption~\ref{ass:l2}. Therefore $\sqrt{n_{\mathrm{eff}}}(\widehat\theta-\widehat\theta^\star)\pto 0$.
\end{proof}

\begin{remark}[Relation to i.i.d.\ DML rate conditions]\label{rem:n14}
In i.i.d.\ DML analyses of the ATE with \emph{unknown} propensities, asymptotic linearity typically requires a product-rate condition between the propensity-score error and the outcome-regression error (often enforced via $o_P(n^{-1/4})$-type rates for each component); see, e.g., \citep{chernozhukov2018double,chernozhukov2022locally}.
In our setting, propensities are assumed known and logged, so the remaining nuisance component enters only through predictable regression adjustments.
As a result, nuisance convergence is not needed for validity of the studentized CLT, and $L^2$ consistency is sufficient for oracle equivalence (Theorem~\ref{thm:oracle_equivalence}).
\end{remark}

\subsection{A Sufficient Condition for Variance Growth}\label{sec:var_growth_sufficient}

Assumption~\ref{ass:var_growth} is mild but abstract. The following provides a primitive sufficient condition.

\begin{assumption}[Nontrivial noise]\label{ass:noise}
$\E[\sigma_1^2(X) + \sigma_0^2(X)] \ge \underline{\sigma}^2 > 0$.
\end{assumption}

\begin{lemma}[A sufficient condition for variance growth]\label{lem:var_growth_suff}
Assume Assumptions~\ref{ass:iid}, \ref{ass:logging_integrity}, \ref{ass:unconf}, \ref{ass:overlap}, and \ref{ass:noise}. Let $\xi_t:=\phi_t(\widehat m_{t-1})-\theta_0$ with $\widehat m_{t-1}$ satisfying Assumption~\ref{ass:predictable_nuis}, and let $V_{\mathcal{T}}^2:=\sum_{t\in\mathcal{T}}\E[\xi_t^2\mid \cF_{t-1}]$ as in \eqref{eq:quadratic_variations}. Then there exists $v_->0$ such that for every scored set $\mathcal{T}$ with $n_{\mathrm{eff}}:=|\mathcal{T}|\to\infty$,
\[
\Pp\!\big(V_{\mathcal{T}}^2 \ge v_-\, n_{\mathrm{eff}}\big)\to 1,
\]
so Assumption~\ref{ass:var_growth} holds.
\end{lemma}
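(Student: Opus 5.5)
The argument will bound each conditional second moment $\E[\xi_t^2\mid\cF_{t-1}]$ from below by a deterministic positive constant. Summing over $\mathcal{T}$ will then give $V_{\mathcal{T}}^2\ge v_- n_{\mathrm{eff}}$ almost surely, which is stronger than the probabilistic statement. The starting point is Proposition~\ref{prop:var_decomp} applied with $m=\widehat m_{t-1}$, which is legitimate because $\widehat m_{t-1}$ is $\cF_{t-1}$-measurable (Assumption~\ref{ass:predictable_nuis}). Conditionally on $\cG_t$, the decomposition~\eqref{eq:general_second_moment} writes $\E[\xi_t^2\mid\cG_t]$ as a sum of four nonnegative terms. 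Dropping $(\tau(X_t)-\theta_0)^2$ and the augmentation term, we get
\[
\E[\xi_t^2\mid\cG_t]\ \ge\ \frac{\sigma_1^2(X_t)}{\pi_t}+\frac{\sigma_0^2(X_t)}{1-\pi_t}\ \ge\ \sigma_1^2(X_t)+\sigma_0^2(X_t),
\]
since $\pi_t,1-\pi_t\in(0,1)$. Overlap (Assumption~\ref{ass:overlap}) is not even needed for this direction, though it keeps all quantities finite.

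Next I would take conditional expectations given $\cF_{t-1}\subseteq\cG_t$ and use the tower property:
\[
\E[\xi_t^2\mid\cF_{t-1}]\ \ge\ \E\big[\sigma_1^2(X_t)+\sigma_0^2(X_t)\mid\cF_{t-1}\big].
\]
Under Assumption~\ref{ass:iid}, $X_t$ is independent of $\cF_{t-1}$: the past records $Z_1,\dots,Z_{t-1}$ are functions of earlier units and of exogenous platform randomness, which is taken independent of future arrivals. Hence the right-hand side equals $\E[\sigma_1^2(X)+\sigma_0^2(X)]\ge\underline\sigma^2$ by Assumption~\ref{ass:noise}. The same independence fact is already used in the proof of Lemma~\ref{lem:mds}, so I will invoke it the same way. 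Summing over $t\in\mathcal{T}$ then gives $V_{\mathcal{T}}^2\ge\underline\sigma^2 n_{\mathrm{eff}}$ almost surely, so $v_-:=\underline\sigma^2$ works and the probability equals one for every $n$.

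The main obstacle is justifying that Proposition~\ref{prop:var_decomp} applies with the random predictable nuisance. In particular, its proof uses $\E[\varepsilon_{t,a}^2\mid\cG_t]=\sigma_a^2(X_t)$ and the vanishing of the cross terms in $A_t(1-A_t)$. Both need $(Y_t(0),Y_t(1))\indep(\cF_{t-1},\pi_t)\mid X_t$, which is exactly the conditional independence derived in the proof of Proposition~\ref{prop:identification} from Assumption~\ref{ass:iid}, Definition~\ref{def:propensity} and Assumption~\ref{ass:unconf}.

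If one prefers to avoid relying on the full decomposition, there is a direct route. Write $\phi_t(\widehat m_{t-1})-\tau(X_t)$ as in the proof of Proposition~\ref{prop:var_decomp}. Condition additionally on $A_t$; given $(\cG_t,A_t)$ the residual $\varepsilon_{t,A_t}$ is mean zero with variance $\sigma_{A_t}^2(X_t)$ and is orthogonal to the $(\cG_t,A_t)$-measurable remainder. The conditional variance is therefore at least $\frac{A_t}{\pi_t^2}\sigma_1^2+\frac{1-A_t}{(1-\pi_t)^2}\sigma_0^2$, whose $\cG_t$-expectation is the displayed lower bound.

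Finally, integrability holds whenever the right-hand quantities are finite. If some $\sigma_a^2$ has infinite mean the inequality holds trivially.
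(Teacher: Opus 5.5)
Your proof is correct and follows the paper's argument: you bound $\E[\xi_t^2\mid\cG_t]$ from below via Proposition~\ref{prop:var_decomp} by dropping the nonnegative terms, take $\E[\cdot\mid\cF_{t-1}]$ using independence of $X_t$ from $\cF_{t-1}$, and sum over $\mathcal{T}$ to get an almost-sure bound. The only difference is cosmetic. The paper uses overlap to get $1/\pi_t,\,1/(1-\pi_t)\ge 1/(1-\varepsilon)$ and hence $v_-=\underline{\sigma}^2/(1-\varepsilon)$, whereas you use $1/\pi_t,\,1/(1-\pi_t)\ge 1$ to get $v_-=\underline{\sigma}^2$ without needing overlap.
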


\begin{proof}
Fix $t\in\mathcal{T}$. By Proposition~\ref{prop:var_decomp} and nonnegativity of
the augmentation term,
\[
\E[\xi_t^2\mid \cG_t]
\ge
\frac{\sigma_1^2(X_t)}{\pit}+\frac{\sigma_0^2(X_t)}{1-\pit}.
\]
By overlap (Assumption~\ref{ass:overlap}), $\pi_t\le 1-\varepsilon$ and $1-\pi_t\le 1-\varepsilon$ (so $1/\pi_t\ge 1/(1-\varepsilon)$ and $1/(1-\pi_t)\ge 1/(1-\varepsilon)$), hence
\[
\frac{\sigma_1^2(X_t)}{\pi_t}+\frac{\sigma_0^2(X_t)}{1-\pi_t}
\ge
\frac{\sigma_1^2(X_t)+\sigma_0^2(X_t)}{1-\varepsilon}.
\]
Taking $\E[\cdot\mid \cF_{t-1}]$ and using i.i.d.\ arrivals (Assumption~\ref{ass:iid}) yields
\[
\E[\xi_t^2\mid \cF_{t-1}]
=
\E\big[\E[\xi_t^2\mid \cG_t]\mid \cF_{t-1}\big]
\ge
\frac{1}{1-\varepsilon}\E[\sigma_1^2(X)+\sigma_0^2(X)]
\ge
\frac{\underline{\sigma}^2}{1-\varepsilon}
=:v_-,
\]
almost surely.
Summing over $t\in\mathcal{T}$ gives $V_{\mathcal{T}}^2=\sum_{t\in\mathcal{T}}\E[\xi_t^2\mid \cF_{t-1}]\ge v_- n_{\mathrm{eff}}$ almost surely, hence the probability statement holds.
\end{proof}

\section{Assumption-to-Result Map}\label{app:assumptions}

\begin{table}[t]
\centering
\small
\begin{tabular}{@{}p{0.1\linewidth} p{0.4\linewidth} p{0.4\linewidth}@{}}
\toprule
Assumption & Informal description & Used in \\
\midrule
\ref{ass:sutva} & SUTVA / no interference & Identification of $\theta_0$ (Section~\ref{sec:model}); standing throughout. \\
\ref{ass:iid} & i.i.d.\ arrivals / no selection (or mean-stationarity alternative) & Lem.~\ref{lem:mds}; Prop.~\ref{prop:var_decomp}; Lem.~\ref{lem:var_growth_suff} (Appendix~\ref{sec:var_growth_sufficient}); Thm.~\ref{thm:studentized} via Lem.~\ref{lem:mds}. \\
\ref{ass:logging_integrity} & logged executed propensities & Lem.~\ref{lem:mds}; Prop.~\ref{prop:var_decomp}; Lem.~\ref{lem:var_growth_suff}. \\
\ref{ass:unconf} & sequential randomization / no unmeasured confounding & Lem.~\ref{lem:mds}; Prop.~\ref{prop:var_decomp}; Lem.~\ref{lem:var_growth_suff}. \\
\ref{ass:T_det} & prespecified (deterministic) scored set $\mathcal{T}$ & Thm.~\ref{thm:studentized} (via predictable indicator $\ind{t\in\mathcal{T}}$); Appendix~\ref{app:proofs} proof of Thm.~\ref{thm:studentized}. \\
\ref{ass:predictable_nuis} & nuisances used for unit $t$ depend only on past & Lem.~\ref{lem:mds}; Thm.~\ref{thm:studentized}. \\
\ref{ass:overlap} & executed overlap on scored units & Lem.~\ref{lem:moment_bounds}; Lem.~\ref{lem:var_growth_suff}; Thm.~\ref{thm:studentized} via Lem.~\ref{lem:moment_bounds}. \\
\ref{ass:moments} & bounded moments of potential outcomes & Lem.~\ref{lem:moment_bounds}; Thm.~\ref{thm:studentized}. \\
\ref{ass:nuis_stab} & stability of nuisance fits (bounded moments) & Lem.~\ref{lem:moment_bounds}; Thm.~\ref{thm:studentized}. \\
\ref{ass:var_growth} & variance growth ($V_{\mathcal{T}}^2\to\infty$) & Thm.~\ref{thm:studentized} (verifying conditions of Thm.~\ref{thm:selfnorm}). \\
\ref{ass:noise} & nondegenerate outcome noise & Cor.~\ref{cor:var_growth_suff}; Lem.~\ref{lem:var_growth_suff}. \\
\ref{ass:stab} & deterministic variance limit (stabilization) & Cor.~\ref{cor:nonstudentized}. \\
\bottomrule
\end{tabular}
\caption{Assumption-to-result map}\label{tab:assumption_ledger}
\end{table}

\section{Data contract and operational logging requirements}\label{app:repro}

This appendix describes a minimal ``data contract'' for implementing Theorem~\ref{thm:studentized} in an experimentation platform. The contract separates (i) auditable pipeline conditions (executed-propensity logging and predictable nuisance construction) from (ii) substantive causal/model assumptions (e.g., SUTVA/no interference, sequential randomization, and no selection) that are not fully verifiable from logs alone.

\begin{table}[t]
\centering
\small
\begin{tabular}{@{}p{0.46\linewidth} p{0.48\linewidth}@{}}
\toprule
Persisted artifact & Audit / validation check \\
\midrule
Executed propensity $\pi_t$ and assignment record for each unit $t$ (including any clipping rule and RNG call metadata) &
\textbf{Executed propensity integrity.} Verify that the logged $\pi_t$ is the probability passed to the randomization device. As a coarse calibration diagnostic, compare $\overline A_B$ to $\overline\pi_B$ within propensity bins $B$; this can catch severe implementation errors, but it does not certify correct logging/randomization or validate the causal assumptions (Remark~\ref{rem:logging_audit}). \\[0.4em]
Experiment-level overlap rule (e.g., clipping threshold $\varepsilon$) and the realized min/max of $\pi_t$ on scored units &
\textbf{Executed overlap.} Check that $\pi_t\in[\varepsilon,1-\varepsilon]$ for all $t\in\mathcal{T}$. If overlap is enforced by clipping, confirm clipping occurred at assignment time. \\[0.4em]
Scored-set definition and indicator $T_t:=\ind{t\in\mathcal{T}}$ (or a documented predictable rule) &
\textbf{Scored-set integrity.} Confirm $\mathcal{T}$ is prespecified (Assumption~\ref{ass:T_det}) or, if a predictable rule is used (e.g., due to outcome delay), document the rule and verify it does not depend on future outcomes. \\[0.4em]
Nuisance-fit artifacts for each scoring block (training indices, timestamps, learner configuration, hyperparameters, and randomness/seeds) &
\textbf{Predictability.} Verify that the nuisance fit used to score unit $t$ was trained only on indices $<t$, and that the nuisance is held fixed while scoring all units in its associated block (Definition~\ref{def:forward} and Assumption~\ref{ass:predictable_nuis}). \\[0.4em]
Prespecified reporting horizon $n$ (and, if applicable, an analysis-query log) &
\textbf{Fixed horizon / no optional stopping.} Confirm that confidence intervals are produced only at the prespecified horizon(s). If the experiment is continuously monitored or stopped adaptively, the fixed-horizon interval is not guaranteed; use time-uniform methods instead (Remark~\ref{rem:optional_stopping}). \\
\bottomrule
\end{tabular}
\caption{Minimal logging contract and audit checks for fixed-horizon inference.}\label{tab:data_contract}
\end{table}

\paragraph{Design-time overlap vs analysis-time truncation.}
Overlap should be enforced at the assignment stage if needed (e.g., by clipping $\pi_t$ before randomization). Post-hoc truncation of inverse-propensity weights in the analysis generally changes the estimand and can introduce bias; if units are dropped or trimmed based on realized propensities, the target estimand must be redefined explicitly.

\paragraph{Extension to multi-arm treatments.}
For $K>2$ arms with assignment probabilities $\pi_{t,a}$, the same contract applies: log the executed $\pi_{t,a}$ for all arms, enforce $\pi_{t,a}\ge \varepsilon$ on scored units, and construct nuisance fits predictably. The AIPW score and studentizer are obtained by replacing binary inverse-propensity terms by their $K$-arm analogues.

% ============================================================================
% REFERENCES
% ============================================================================

\end{document}